\newtheorem{lemma}{Lemma}
\newtheorem{definition}{Definition}
\newcolumntype{L}[1]{>{\raggedright\let\newline\\\arraybackslash\hspace{0pt}}m{#1}} 
\newcolumntype{C}[1]{>{\centering\let\newline\\\arraybackslash\hspace{0pt}}m{#1}} 
\newcolumntype{R}[1]{>{\raggedleft\let\newline\\\arraybackslash\hspace{0pt}}m{#1}}
\def\barr{\begin{tabular}[c]{@{}c@{}}}
\def\earr{\end{tabular}}
\pgfplotsset{
    compat=1.9, 
    every tick label/.append style={font=\Large},
    label style={font=\Large}
}
\DeclarePairedDelimiter\ceil{\lceil}{\rceil}
\DeclarePairedDelimiter\floor{\lfloor}{\rfloor}
\newcommand*{\@rowstyle}{}
\newcommand*{\rowstyle}[1]{
  \gdef\@rowstyle{#1}%
  \@rowstyle\ignorespaces%
}
\newcolumntype{=}{
  >{\gdef\@rowstyle{}}%
}
\newcolumntype{+}{
  >{\@rowstyle}%
}
\newcommand\myverb{}
\def\myverb|#1#2|{\underline{#1}#2}
\newcommand\U[1]{\underline{#1}}
\definecolor{ggreen}{HTML}{138a07}
\definecolor{bblue}{rgb}{0.06, 0.2, 0.8}
\newcommand{\nc}[1]{#1} 
\newcommand{\oc}[1]{#1} 
\newcommand{\dir}{Dir\_OA}
\newcommand{\itr}{It\_OA}
\newcommand{\gdir}{G\_DirA}
\newcommand{\gitr}{G\_ItA}
\newcommand{\ndir}{N\_DirA}
\newcommand{\nitr}{N\_ItA}
\newcommand{\Z}{\mathbb{Z}^{\geq 0}}
\definecolor{vviolet}{rgb}{0.5, 0.0, 1.0}
\definecolor{ultramarineblue}{rgb}{0.25, 0.4, 0.96}
\definecolor{hotmagenta}{rgb}{1.0, 0.11, 0.81}
\definecolor{cadmiumgreen}{rgb}{0.0, 0.42, 0.24}
\definecolor{kellygreen}{rgb}{0.3, 0.73, 0.09}
\definecolor{sand}{rgb}{0.76, 0.7, 0.5}
\definecolor{smalt(darkpowderblue)}{rgb}{0.0, 0.2, 0.6}
\definecolor{unitednationsblue}{rgb}{0.36, 0.57, 0.9}
\begin{document}
\title{\LARGE A Crowd-enabled Solution for Privacy-Preserving and Personalized Safe Route Planning for Fixed or Flexible Destinations (Full Version)}
\author{Fariha Tabassum Islam, Tanzima Hashem, and Rifat Shahriyar
\IEEEcompsocitemizethanks{\IEEEcompsocthanksitem F.T. Islam, T. Hashem, and R. Shahriyar are with the Department of Computer Science and Engineering, Bangladesh University of Engineering and Technology, Dhaka, Bangladesh
\IEEEcompsocthanksitem E-mail: fariha.t13@gmail.com, tanzimahashem@cse.buet.ac.bd, rifat@cse.buet.ac.bd}%
}

\markboth{Journal of \LaTeX\ Class Files,~Vol.~14, No.~8, August~2015}%
{Shell \MakeLowercase{\textit{et al.}}: Bare Demo of IEEEtran.cls for Computer Society Journals}

\IEEEtitleabstractindextext{%
\begin{abstract}
\oc{Ensuring travelers' safety on roads has become a research challenge in recent years. We introduce a novel safe route planning problem and develop an efficient solution to ensure the travelers' safety on roads.} Though few research attempts have been made in this regard, all of them assume that people share their sensitive travel experiences with a centralized entity for finding the safest routes, which is not ideal in practice for privacy reasons. Furthermore, existing works formulate safe route planning in ways that do not meet a traveler's need for safe travel on roads. Our approach finds the safest routes within a user-specified distance threshold based on the personalized travel experience of the knowledgeable crowd without involving any centralized computation. We develop a privacy-preserving model to quantify the travel experience of a user into personalized safety scores. \nc{Our algorithms, direct and iterative for finding the safest route further enhance user privacy by minimizing the exposure of personalized safety scores with others.} \oc{Our safe route planner can find the safest routes for individuals and groups by considering both a fixed and a set of flexible destination locations.} 
Extensive experiments using real datasets show that our approach finds the safest route in seconds. \nc{Compared to the direct algorithm, our iterative algorithm requires 47\% less exposure of personalized safety scores.}
\end{abstract}

\begin{IEEEkeywords}
safe route planner, crowdsource, privacy, route planner, safest route
\end{IEEEkeywords}}

\maketitle

\IEEEdisplaynontitleabstractindextext
\IEEEpeerreviewmaketitle

\section{Introduction} \label{sec:intro}

\oc{Ensuring safe travel on roads is essential for the development of a safe city. \nc{While traveling on roads in any mode (e.g., walking, cycling or driving), people face many inconveniences like theft, robbery, pick-pocketing, and accidents; women face harassment like eve-teasing and unwanted physical touch~\cite{natarajan2016crime,spicer2016street,news0,news1,news2}.} 
The shortest or the fastest route is not always the best choice. People would like to travel a little bit longer on a safer route that avoids those inconveniences. Journey planners like Google or Bing Maps do not show the risky roads to travelers. Since the safety of a road may change with time, it is not easy for a traveler to know the safest route for traveling from a source to a destination location. To meet the traveler's need on roads, we introduce a safe route planner that finds the safest routes (SRs) with crowdsourced data and computation.}

\oc{Our safe route planner supports four important query types: (i) \emph{safest route (SR) query}. (ii) \emph{flexible safest route (FSR) query}, (iii) \emph{group safest route (GSR) query}, and (iv) \emph{group flexible safest route (GFSR) query}. An SR query finds the SR between a source-destination pair within a distance constraint. Sometimes a user may have the flexibility for the destination; for example, a user would be happy to visit any of the branches of a superstore within a distance constraint if the safety level of the route to reach the superstore is increased. Inspired by this scenario, an FSR query finds the SR within a distance constraint by considering a fixed source and a set of destination locations. On the other hand, the GSR and the GFSR queries extend the SR and FSR queries for groups, respectively. A group of people may want to meet for a variety of purposes; sometimes their destination is fixed (e.g., a specific restaurant), and sometimes it is flexible (e.g., a set of restaurants). A GSR query finds the set of SRs from the independent source locations of the group members to the fixed destination, whereas a GFSR query finds the set of SRs from the independent source locations of the group members by considering a set of flexible destinations. In Section~\ref{problem-formulation}, we explain these queries with examples.}   

The data needed for computing the SRs may come from official reports and the personal travel experiences of the crowd. The latter is more valuable than the former one due to its recency and adequacy. However, travel experiences are often sensitive and private data, and people, especially women, do not feel comfortable sharing their detailed travel experiences and harassment data with others\oc{~\cite{DBLP:conf/chi/protibadi14}}. These factors have inspired us to develop a privacy-enhanced safe route planning system by not sharing the personalized travel experiences of the crowd with a centralized entity or others.

Our approach ensures the privacy of crowd data and personalizes the safety score (SS) of a user's travel experience (both safe and unsafe) with respect to the user's travel pattern. If two users face the same unsafe event on two different roads, then these roads may have different SSs considering the frequency and recency of the users' visits on those roads. Ignoring the personal travel pattern of the users would reduce the quality of data and the accuracy of the query answer. We develop a model to quantify a user's travel experience for a visited area into a \emph{personalized safety score} (pSS) based on different parameters like frequency and recency of the user's visits, location, time, and type of inconveniences faced. Users store their pSSs of their known areas on their own devices or any other private storage (e.g., cloud storage) and use them to find the SRs for others. The transformation of a user's travel experience into a pSS is a one-way mapping. From the revealed pSS of a user, it is not possible to pinpoint the type of incident faced by the user. It may only allow an adversary to infer high-level information on a user's travel experience (e.g., a user has encountered an unsafe event without knowing the unsafe event type).

To further enhance user privacy, we minimize the amount of pSS information shared to evaluate the SRs. We develop efficient query processing algorithms that find the SRs from the refined search space and minimize the exposure of pSS information. Since the number of possible routes between a source-destination pair is extremely high, a naive algorithm cannot find the SRs in real time. Our search space refinement techniques allow our query processing algorithms to find the SRs with significantly reduced processing overhead. 

Every user is not familiar with all roads, and it is also not feasible to involve a user for all queries. For a specific SR query or its variant, we identify the users who are familiar with the query relevant area and select them as query-relevant group members. The trustworthiness of the query answer depends on the overall knowledge of the selected query-relevant group members. To show the credibility of the answer, we present a new measure called \emph{confidence level}~\cite{DBLP:journals/imwut/HashemHSM18,DBLP:journals/percom/MahinHK17} in the context of finding the SRs and variants.

Existing safe route planners involve a centralized entity to find the SRs using crime or accident data collected from reports~\cite{DBLP:journals/is2016/Urban-navigation} or crowd~\cite{DBLP:conf/www2014/SocRoutes} or both~\cite{DBLP:conf/gis2011/crowdsafe,DBLP:conf/gis2014/treads,r:safe:CrowdAdaptive}. They have major limitations:
\begin{compactitem}
\item  Ignore the privacy issues of the crowd harassment and incident data and thus suffer from data scarcity problem. Missing incident data can cause a system to return a route that is not actually safe and put a traveler at risk.
\item Do not personalize the crowd’s travel experiences by considering a user’s travel pattern, which is essential to improve the accuracy of the query answer.
\item Do not consider individual distances associated with different SSs for ranking the routes. For example, if two routes have the same lowest SS, then the route for which a user has to travel less distance with the lowest SS is the safest one, though its total distance might be greater than that of the other route.
\item Do not show any measure to represent the trustworthiness of the identified SRs.
\end{compactitem}
 
In recent years, the increase of the computational power and storage in smartphones has enabled researchers to envision for crowdsourced systems~\cite{DBLP:journals/imwut/HashemHSM18,DBLP:journals/percom/MahinHK17}. To the best of our knowledge, we propose the first privacy-enhanced and personalized solution for safe route planning with crowdsourced data and computation. Our solution overcomes the limitations of existing route planners. Our contributions in this paper are as follows:
\begin{compactitem}
 \item We present a model to quantify a user's travel experiences into irreversible pSSs and modify the indexing technique, $R$-tree to store pSSs. Based on pSSs, we design a privacy-enhanced crowd-enabled solution for the SR queries and variants.
 
 \item We select the users who have the required knowledge in a query relevant area, and we guarantee the credibility of the query answer evaluated based on the data of the selected group members in terms of the confidence level.
 
 \item We develop optimal algorithms, direct and iterative, to efficiently evaluate the SRs. The direct algorithm reveals group members' pSSs only for the query relevant area. The iterative one further reduces the amount of shared pSSs at the cost of multiple communications per group member. 
 
 \item \oc{We generalize our direct and iterative algorithms to efficiently process the SR query and its variants: FSR, GSR, and GFSR queries. We show that the direct application of the direct and iterative algorithms to find SRs between a source-destination pair to evaluate SR query variants incur excessive processing overhead in most cases.} 
 
 \item We run extensive experiments with real datasets and evaluate the effectiveness and efficiency of our approach. 
\end{compactitem}

\oc{This paper extends the work in~\cite{DBLP:conf/icde/IslamHS21}, where we introduced a novel SR query and proposed the first privacy-enhanced and personalized solution to solve those queries with crowdsourced data and computation. In this paper, we enhance the work in the following ways: (i) we improve our safe route planner by introducing SR query variants: FSR, GSR, and GFSR queries, (ii) we provide generalized direct and iterative algorithms for efficient processing of the SR query and its variants, \nc{(iii) we show the complexity analysis and performance analysis of our modified $R$-tree, (iv) we provide the formal privacy attacker model and privacy proof,} (v) we present new experimental analysis to show the efficiency of our generalized direct and iterative algorithms for the SR query variants, and \nc{the effectiveness of finding SRs over the shortest routes and the safest routes without any distance constraint}. 
}

\section{Problem Formulation}\label{problem-formulation}
The road network $N=(V,E)$ consists of a set of vertices $V$ and a set of road segments $E$. The vertices represent the start or the end or the intersection points of roads. An edge $e_{ij} \in E$ \oc{representing a road segment} connects the vertex $v_i$ to the vertex $v_j$, where $v_i,v_j\in V$. 
A route $R$ consists of a sequence of vertices $R = (v_{i_1},v_{i_2},\ldots,v_{i_{|R|}})$, where $e_{ {i_{k-1}} {i_k} }\in E$. The total distance $dist(R)$ of $R$ is the summation of distances of all edges in $R$. 

The total space is divided into grid cells. The knowledge score (KS), the pSS, and the SS are computed for each grid cell area and are defined as follows:
    \begin{definition}
    \emph{A knowledge score (KS):} The KS of a user for a grid cell area represents whether the user has visited the area of a grid cell. This KS is 0 if the user has not visited the area in the last $w$ days, 1 otherwise, \oc{where $w$ is an integer greater than 0}.     
    \end{definition}    
    \begin{definition}
    \emph{A personalized safety score (pSS): Given the safety score bound $[-S, S]$, the pSS of a grid cell area represents a user's travel experience in the area and is quantified between $-S \leq pSS \leq S$.}
    \end{definition}    
    \begin{definition}
    \emph{A safety score (SS):} Given \label{def_ss}a set of pSSs $\Psi_1,\Psi_2,\ldots,\Psi_n$ of $n$ users for a grid cell area, the SS of the grid cell area is computed as $\floor*{ \frac{ \Psi_1+\Psi_2+\ldots+\Psi_n}{n}}$.    
    \end{definition}    

To make the SS measure independent of the number of users who know about an area, we take the average of the pSSs instead of adding them together. The number of users whose pSSs are used to find the SS is considered to determine the credibility of the safest route (Section~\ref{confidence-level}). \oc{An edge representing a road segment may have multiple SSs if it passes through multiple grid cells with different SSs.}

\emph{SS-based route ranking.} The SS of route $R$ is the minimum of all SSs associated with the edges of $R$. The intuition behind considering the minimum SS instead of the average SS of the route is that even a small distance of road with a bad SS may put a traveler at risk. The route that has the largest minimum SS among all possible routes between a source-destination pair is considered as the SR. \nc{If the minimum SS of two routes is the same, then we consider the smallest SS for which associated distances of two routes differ.} The route that has the smallest associated distance for the considered SS is the SR. We formally define the SR within distance constraint $\delta$ as follows: 

\begin{definition}
\emph{A safest route (SR):}
\oc{Given a road network $N(V, E)$, distances and SSs of road segments, a source location $s$, a destination location $d$ and a distance constraint $\delta$, the safest route $SR$ between $s$ and $d$ is a route such that $dist(SR) \leq \delta$ and $SR$ is at least as safe as $R$, where $R$ is any other route between $s$ and $d$ having $dist(R) \leq \delta$.}
\end{definition}

\oc{\emph{SS based route-set ranking.} The SS of a set of routes $\mathcal{H}=\{R_1, R_2, ..., R_n\}$ is the minimum of the SSs of all routes in $\mathcal{H}$. Assume, $\mathcal{H}$ and $\mathcal{H}^\prime$ are two set of routes, and $R$ and $R^\prime$ are the least safe routes of $\mathcal{H}$ and  $\mathcal{H}'$, respectively. If $R$ is safer than $R^\prime$, then  $\mathcal{H}$ is safer than $\mathcal{H}^\prime$; otherwise, $\mathcal{H}^\prime$ is safer. }

\begin{figure*}[htb!]
\begin{center}
    \centering
    \includegraphics[height=0.7cm]{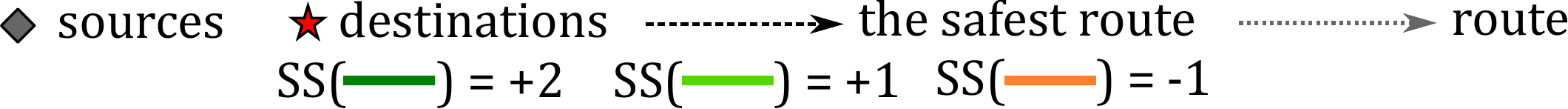}\\
    \subfloat[An SR query\label{fig:SR}]{\includegraphics[width=0.48\textwidth]{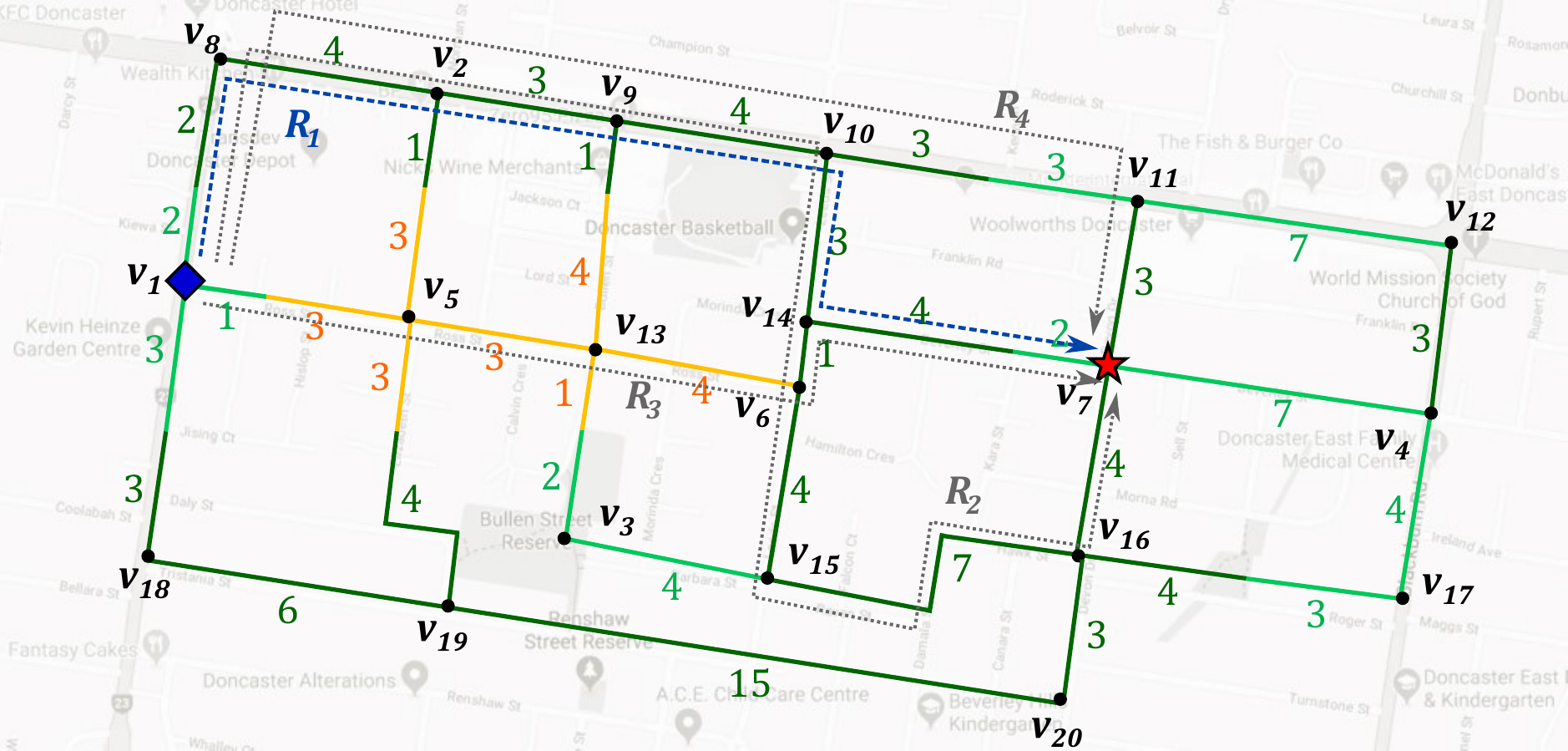}}
    \hfill
    \subfloat[An FSR query\label{fig:FSR}]{\includegraphics[width=0.48\textwidth]{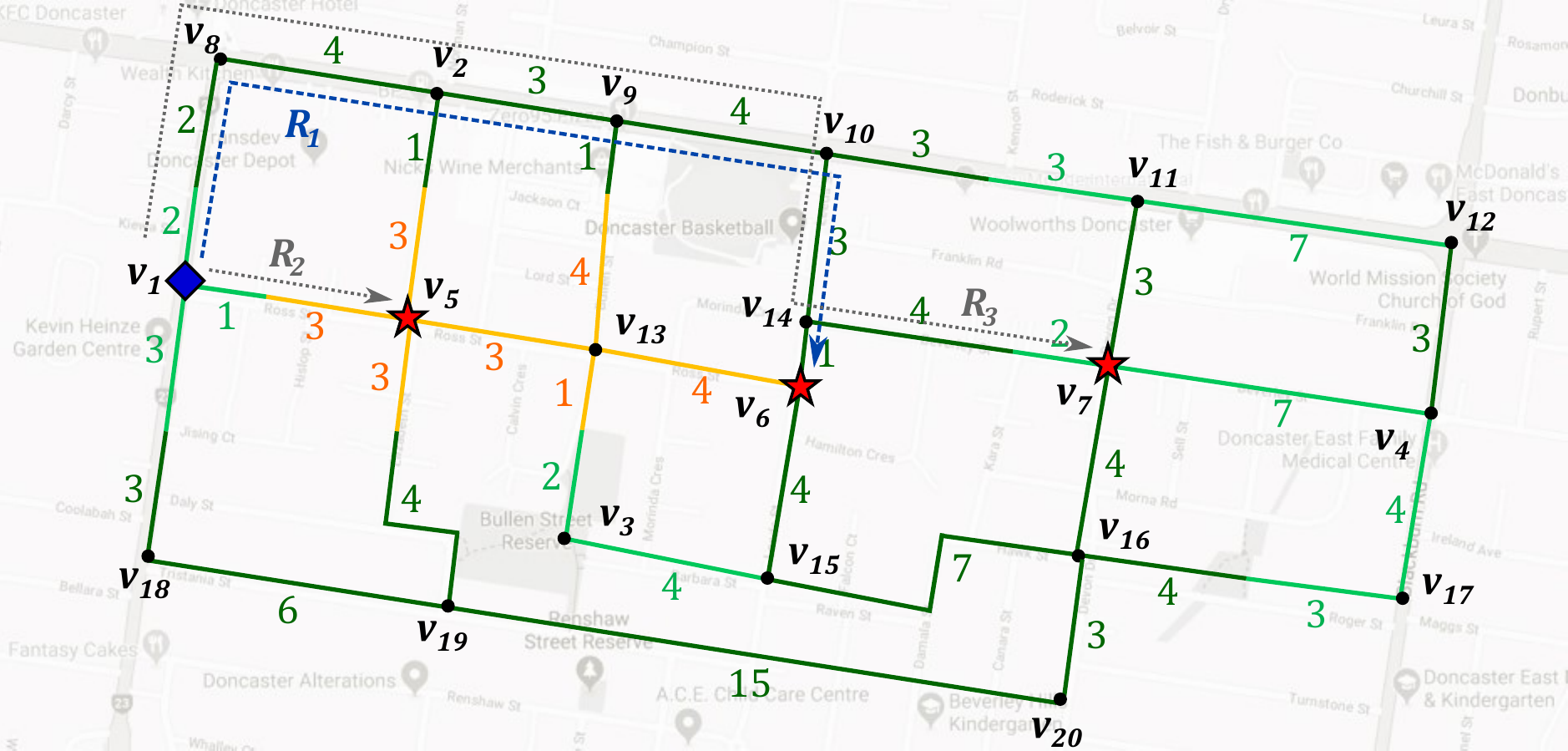}}
\newline
\noindent
    \subfloat[A GSR query\label{fig:GSR}]{\includegraphics[width=0.48\textwidth]{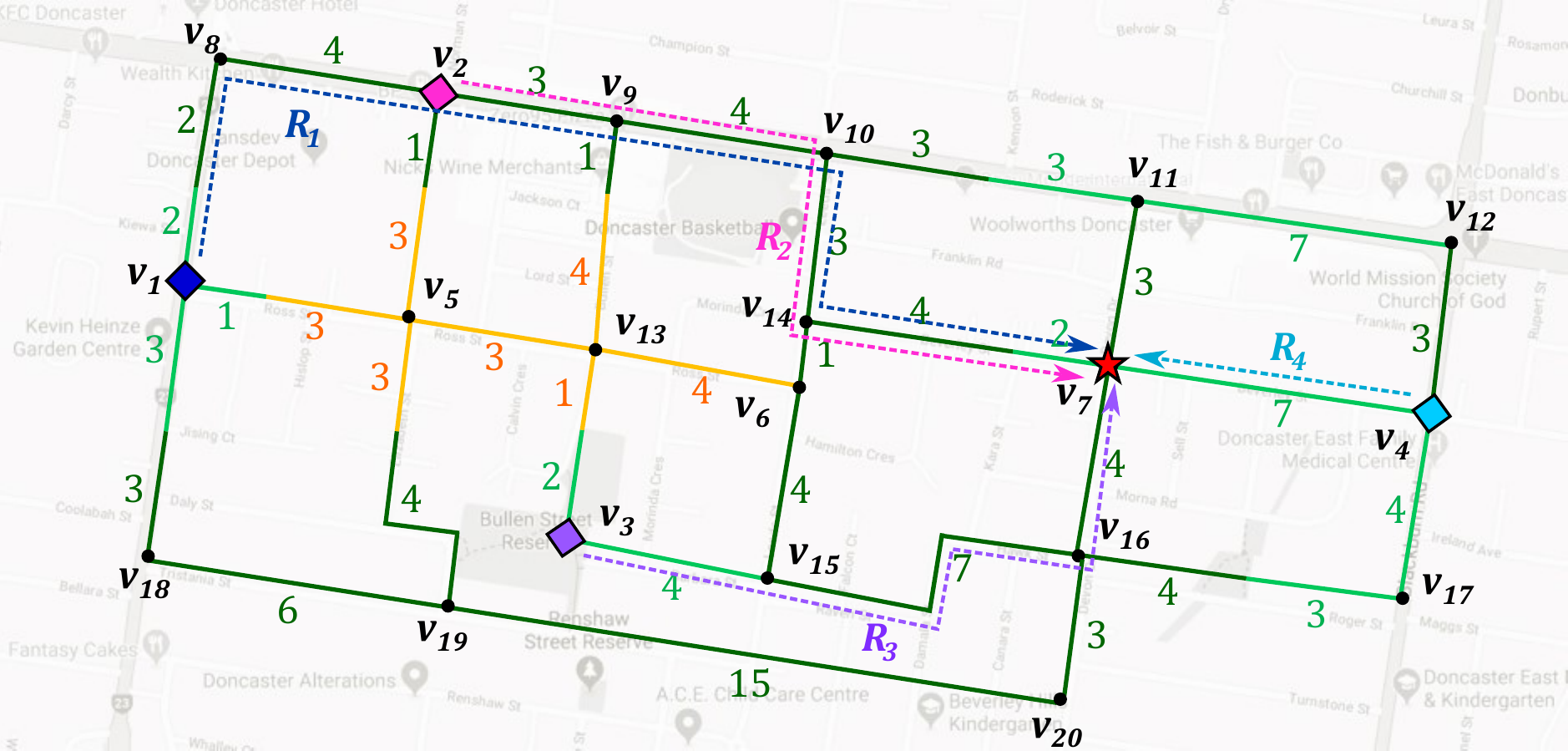}} 
    \hfill
    \subfloat[An GFSR query\label{fig:GFSR}]{\includegraphics[width=0.48\textwidth]{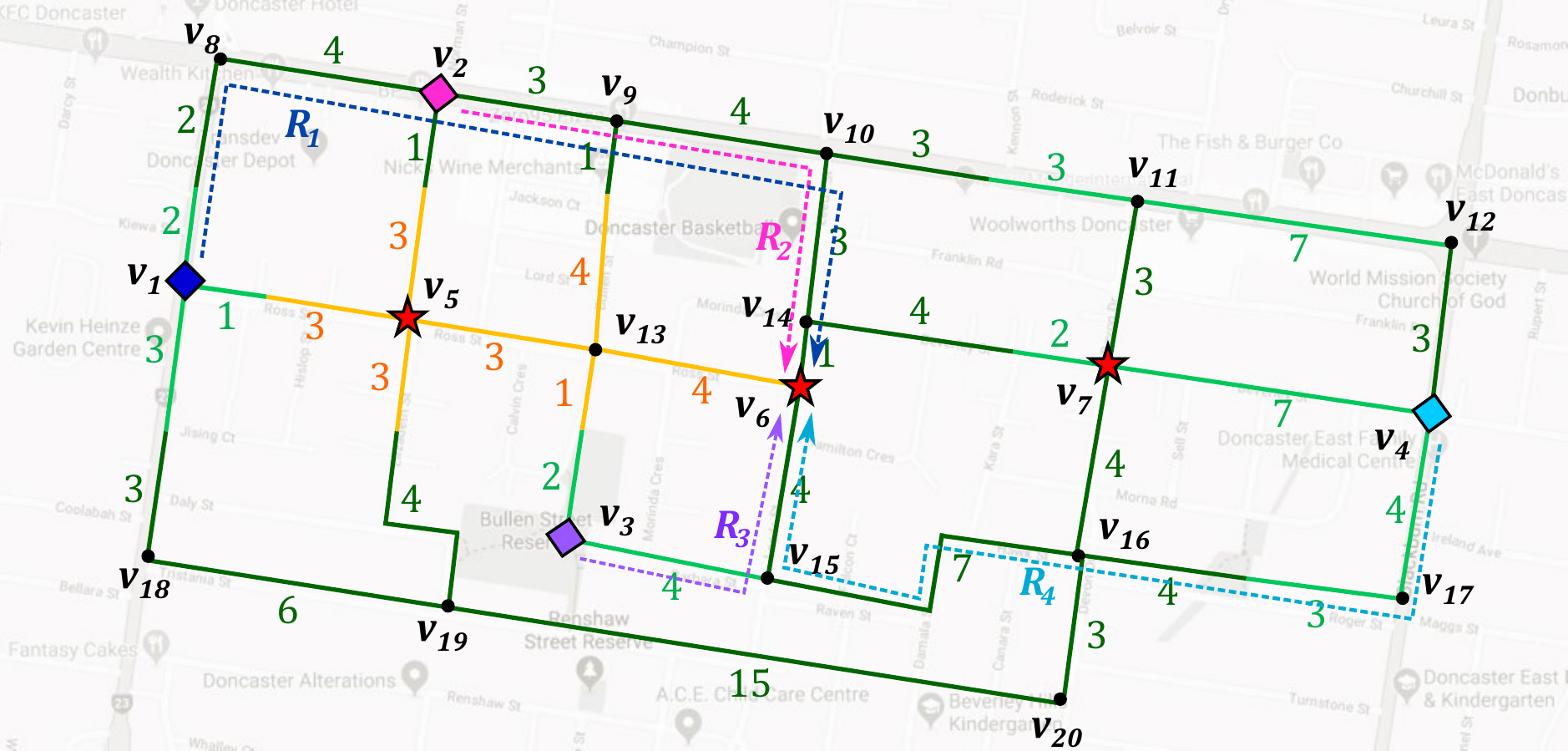}}
\end{center}
\caption[caption]{Examples of various queries in our safe route planner for a small road network where $\delta=25km$. The safety score (SS) of a road segment is represented by its color. The numbers in the road network represent the length of a road segment.}
\label{fig:all-queries}
\end{figure*}

\oc{Now we formulate our safe route planner that addresses the SR query and its variants: the FSR query, the GSR query and the GFSR query}.
\begin{definition}\label{def:gfsr}
\emph{A Safe Route Planner:} 
\oc{Given a road network $N(V, E)$, distances and SSs of road segments, a set of $n$ source locations $L_S = \{s_1, s_2, \dots, s_n\}$, a set of $m$ destination locations $L_D = \{d_1, d_2, \dots, d_m\}$ for $n,m\geq 1$ and a distance constraint $\delta$, the safe route planner returns a set of routes $\mathcal{H^*} = \{SR_1, SR_2, \dots, SR_n\}$ such that the following conditions are satisfied:}   
\begin{enumerate}
\item \oc{For $m \geq 1$, $SR_i$ is the SR between $s_i$ and $d$, where $d \in L_D$  is the same for all returned routes $SR_1$, $SR_2$, $\dots$, $SR_n$, and} 
\item \oc{For $m > 1$, $\mathcal{H^*}$ is safer than $\mathcal{H}^\prime =\{ {SR^\prime}_1$, ${SR^\prime}_2$, $\dots$, ${SR^\prime}_n\}$ where ${SR^\prime}_i$ is the SR between $s_i$ and another destination $d^\prime \in L_D-\{d\}$}
\end{enumerate}
\end{definition}
\oc{When $n=1$ and $m=1$ in Definition~\ref{def:gfsr}, then the safe route planner evaluates a \emph{safest route (SR) query}. When $n=1$ and $m>1$, then it evaluates the \emph{flexible safest route (FSR) query}. On the other hand, when $n>1$, the safe route planner evaluates the \emph{group safest route (GSR) query} for $m=1$, a \emph{group flexible safest route (GFSR) query} for $m>1$. Fig.~\ref{fig:all-queries} shows examples of these four query types.}

\oc{Fig.~\subref*{fig:SR} shows an example of an SR query, where the source is $v_1$ and the destination is $v_7$. The SR query returns $R_1$ as the SR within distance constraint. Fig.~\subref*{fig:SR} also shows $R_2$, $R_3$, and $R_4$ that are some of the possible routes from $v_1$ to $v_7$. The length of $R_2$ is 34 units, which does not satisfy $\delta$, and thus cannot be the SR. $R_1$ is safer than $R_3$ as the smallest SSs associated with them are +1 and -1, respectively. In $R_3$, a user has to pass through a riskier area compared to $R_1$. Though both $R_1$ and $R_4$ have the same length (i.e., 24 units) and smallest SS (+1), $R_1$ is the SR because $R_1$ has the smaller distance (i.e., 4 units) associated with +1.}

\oc{Fig.~\subref*{fig:FSR} shows an example of an FSR query, where $v_1$ is the source, $\{v_5,v_6,v_7\}$ is a set of destinations. The set of flexible destinations may represent different branches of a superstore or ATM booths of a bank, and the user is happy to travel to any of these destinations within the distance constraint that maximizes the route safety. The FSR query returns destination $v_6$ and $R_1$ as the SR to $v_6$. The reason is as follows. In this example, $R_1$, $R_2$, and $R_3$ are the SRs (within $\delta$) from source $v_1$ to destinations $v_6$, $v_5$, and $v_7$, respectively. $R_1$ is safer than $R_2$ as the minimum SS of $R_1$ and $R_2$ are +1 and -1, respectively. The minimum SS is the same (+1) for both routes $R_1$ and $R_3$. $R_1$ is still safer than $R_3$ because the length associated with +1 is 2 and 4 for $R_1$ and $R_3$, respectively.}

\oc{Fig.~\subref*{fig:GSR} shows an example, where a group of users from different locations are planning to meet at a fixed destination (e.g., friends meeting at their favorite restaurant) and requests a GSR query. Here $v_1$, $v_2$, $v_3$, and $v_4$ are source locations of each user of a group and $v_7$ is their destination. The GSR query returns the SRs $R_1$, $R_2$, $R_3$, and $R_4$ from $v_1$, $v_2$, $v_3$ and $v_4$, to destination $v_7$ respectively.}
    
\oc{Fig.~\subref*{fig:GFSR} shows an example of a GFSR query. In this scenario, a group of users might want to meet at any of the set of specified destinations (e.g., restaurants) that can be reached via safer routes compared to others within a distance limit. Here $v_1$, $v_2$, $v_3$ and $v_4$ are the source locations of the users of a group and $\{v_4,v_5,v_6\}$ is the set of their preferred destinations. The GFSR query returns $v_6$ as the safest destination and $R_1$, $R_2$, $R_3$ and $R_4$ as the SRs from $v_1$, $v_2$, $v_3$ and $v_4$, respectively. Here, $v_5$ is not returned as the answer because the users have to go through unsafe road segments of SS -1 to reach it, which can be avoided in case of $v_6$. Moreover, $v_7$ is also not selected because some users have to pass through comparatively unsafer road segments of SS +1 ($v_{14}$ to $v_7$) which can be avoided in $v_6$.}

\nc{\emph{Privacy-enhanced safe route planner.} Our \emph{privacy-enhanced} safe route planner aims to hide the unsafe event types that a user has faced from an adversary, i.e., the centralized server and other users. A pSS only reveals high-level information, like a user encountered an unsafe event, but not the type. Our solution also aims to minimize the number of revealed pSSs to enhance a user's privacy. We detail our privacy model in Section~\ref{privacy-analysis}.}

\section{Related Works} \label{related-works}

\begin{small}
\begin{table*}[t]
\caption{A comparative analysis with existing safe route planners}
\label{tab:comp_related}
\vspace{-1mm}
\centering
\begin{tabular}{|=l||+C{1.8cm}|+C{1.8cm}|+C{0.7cm}|+C{6.9cm}|+C{0.4cm}|+C{0.7cm}|+C{0.8cm}|}

 \hline
 \multirow{2}{*}{} & \multicolumn{5}{c|}{Problem Settings} & \multirow{2}{*}{\barr Pri-\\vacy\earr} & \multirow{2}{*}{\barr Effi-\\ciency\earr} \\ \cline{2-6}
    & \barr Source- \\ Destination\earr & \barr Safety \\ Level\earr & pSS & Objective & $\delta$ &  &  \\ \hline \hline

 \cite{DBLP:journals/is2016/Urban-navigation} & Single & Multiple & $\times$ & Provide multiple routes with trade off between SS and total distance & $\times$ & $\times$ & \checkmark \\ \hline

 \cite{DBLP:conf/www2014/SocRoutes} & Single &  Safe/ Unsafe & $\times$ & Minimize the travel in unsafe regions & $\times$ & $\times$ &  $\times$\\ \hline

 \cite{DBLP:conf/gis2011/crowdsafe,DBLP:conf/gis2014/treads} & Single &  Multiple & $\times$ & Minimize the weighted combination of SS and total distance & $\times$ & $\times$ &  $\times$\\ \hline

 \cite{DBLP:conf/icde2015/safest-path-via-safe-zones} & Single & Safe/ Unsafe & $\times$ & Minimize the travel in unsafe regions & $\times$ & $\times$ &  \checkmark\\ \hline

 \cite{sarraf2018data,sarraf2018data2} & Single & Multiple & $\times$ & Minimize the weighted combination of accident count and travel time & $\times$ & $\times$ &  $\times$ \\ \hline

 \cite{sarraf2020integration} & Single & Multiple & $\times$ & 
 Select a route from a set of alternative routes based on users' preferences of safety, distance and travel time & $\times$ & $\times$ &  N/A \\ \hline


 Ours & Single/ Multiple &  Multiple & \checkmark & \oc{{Maximize the minimum SS of the route and then minimize the individual distances associated with the SSs in the increasing order of SSs}} & \checkmark & \checkmark  &  \checkmark\\ \hline
\end{tabular}%
\vspace{-3mm}
\end{table*}
\end{small}

\subsection{Safe Route Planners}
\label{related:safe}

Though researchers attempted to solve the safe route planning problem, the works have major limitations. Table~\ref{tab:comp_related} shows the problem settings and other features of existing works.

\emph{Problem setting.}
None of the existing work considers minimizing the individual distance associated with risky roads. Thus, the problem settings of existing works are not suitable for safe travel on roads. Furthermore, instead of considering the total distance constraint, selecting appropriate weights in~\cite{DBLP:conf/gis2011/crowdsafe,DBLP:conf/gis2014/treads} is not easy since it is not intuitive to determine which weights would meet a user's preferred trade-off between safety and distance for a specific source-destination pair. Again, there is no guarantee that the returned routes in~\cite{DBLP:journals/is2016/Urban-navigation},\nc{\cite{sarraf2018data,sarraf2018data2,sarraf2020integration}} satisfy a user's required preference for safety and distance. \oc{None of the existing works addresses the problems of finding the SRs for flexible destinations or for a group located at different source locations.}

\emph{Privacy.} 
\nc{The unsafe event data for safe route planners may come from crime and accident reports}~\cite{DBLP:journals/is2016/Urban-navigation,r:safe:CrowdAdaptive,r:safe:fns},\nc{\cite{sarraf2018data,sarraf2018data2,sarraf2020integration}} or directly from crowds~\cite{DBLP:conf/gis2011/crowdsafe,DBLP:conf/gis2014/treads,r:safe:CrowdAdaptive,DBLP:conf/www2014/SocRoutes}. \nc{Those reports are not regularly updated, and incomplete because many crimes and accidents go unreported}. Though the crowd knows more and recent information compared to the crime and accident reports, they would not share their incident and harassment data with a centralized service provider if the privacy of their data is not ensured. Thus, one major limitation of existing works is that they suffer from data scarcity issues for privacy reasons and do not have enough data to provide accurate answers.

\emph{Efficiency.} 
None of the existing safe route planning systems except \cite{DBLP:conf/icde2015/safest-path-via-safe-zones,DBLP:journals/is2016/Urban-navigation} 
developed efficient algorithms for large road networks. However, as already mentioned, the problem settings of \cite{DBLP:conf/icde2015/safest-path-via-safe-zones,DBLP:journals/is2016/Urban-navigation} cannot meet a traveler's requirement on roads.

\emph{Other route planners.} 
Variants of orienteering and scheduling problems~\cite{AllahverdiNCK08,VansteenwegenSO11} have been studied for route planning. An orienteering problem finds a route between a source-destination pair that maximizes the total score within a budget constraint, where a score is obtained when the route goes through a vertex. The scheduling problems focus on incorporating temporal constraints in route planning (e.g., visiting locations to perform services in a timely manner). The problem settings of orienteering and scheduling problems are different from an SR query. Furthermore, their solutions do not consider search space refinement~\cite{JahanHSB19} and are not scalable for large road networks. For example, the exact solution of an orienteering problem can be found for a graph of up to 500 vertices~\cite{VansteenwegenSO11}, whereas the real road networks that we use in our experiments have, on average, 24 thousand vertices.

\subsection{\nc{Nearest and Group Nearest Neighbor Queries on Road Networks}}
Extensive studies~\cite{DBLP:conf/vldb/PapadiasZMT03,DBLP:conf/gis/SankaranarayananAS05,DBLP:conf/sigmod/SametSA08,DBLP:conf/icde/0002CGSMG18,DBLP:conf/icde/PapadiasSTM04,DBLP:journals/pvldb/YanZN11,DBLP:conf/aips/AbeywickramaCS20} have been done to efficiently compute the nearest and group nearest neighbors. The nearest neighbor (NN) query finds the nearest POI that has the smallest distance from a user's location, whereas a group nearest neighbor (GNN) query minimizes the aggregate distance of the POI from the locations of the group members. None of the existing solutions for NN or GNN queries have considered ensuring user safety on  roads.  

\subsection{Crowdsourcing} \label{related:crowd}
Crowdsourcing has been widely used for route recommendation\cite{r:crowd:shorest-path,r:crowd-route:crowdplanner} and POI search~\cite{DBLP:journals/imwut/HashemHSM18,DBLP:journals/percom/MahinHK17, DBLP:conf/huc/HashemAKTQ13}, package delivery~\cite{r:crowd:delivery} and indoor mapping~\cite{r:crowd:indoor}. In~\cite{DBLP:journals/imwut/HashemHSM18}, the authors considered protecting the privacy of a user's POI knowledge by minimizing the shared POI information with others. Compared to static POI data, unsafe events' data are more complex and challenging to hide from others. We develop a quantification model to hide the type of incident data using pSS and search space refinement techniques to minimize the shared pSS information.     
\section{System Overview} \label{system-overview}

\begin{figure*}[tb!]
  \centering
  \includegraphics[width=\textwidth]{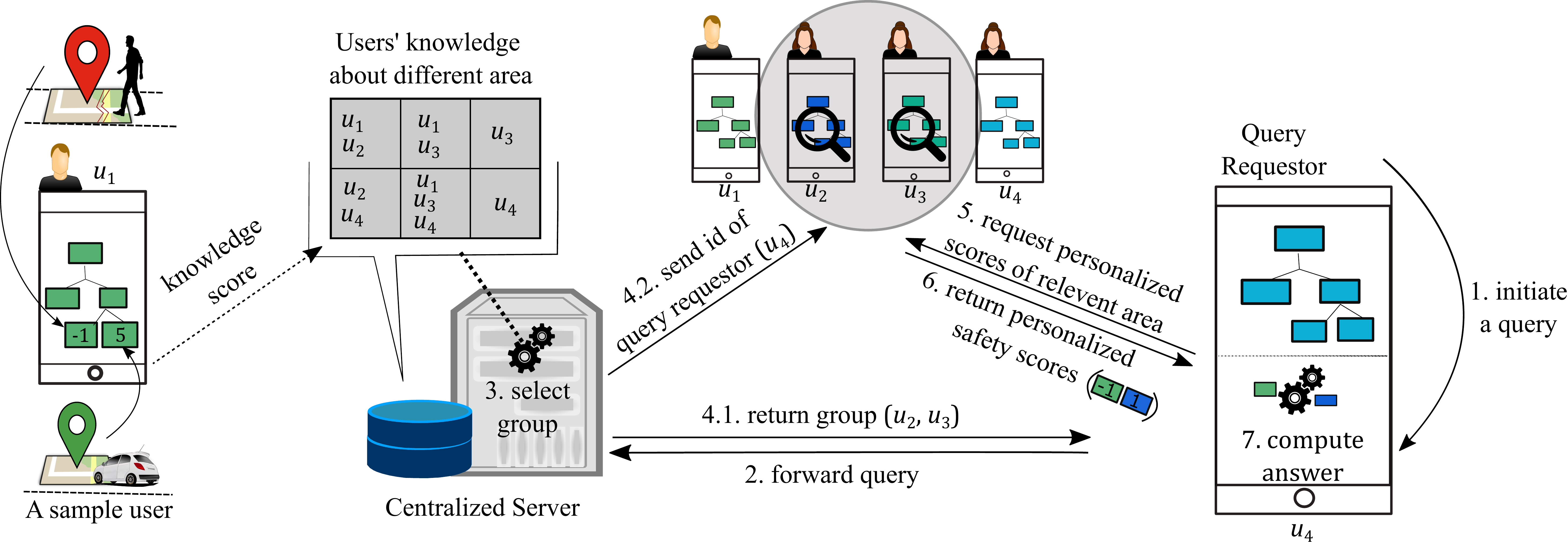}
  \caption{System architecture}
  \label{fig:system-overview}
\end{figure*}

We develop a privacy-enhanced, personalized, and trustworthy solution for safe route planning with crowdsourced data and computation. Fig.~\ref{fig:system-overview} shows the architecture of our system. Users in our system store their pSSs of their visited areas on their own devices. In the case of storage constraints, users can also consider alternative private storage (cloud storage). The users share their KSs with the centralized server. A KS only provides the information that a user has visited the area. A user can also hide the information of her visit to a sensitive area by not setting the corresponding KS to 1 as the user has the control to decide on what the user shares with the centralized server. 

It is not realistic to use the computation power of all users for all queries and ask them whether they know any query relevant area. The availability of KSs allows the centralized server to address this issue. When the centralized server receives a query from a query requestor, it selects a query-relevant group based on the query parameters and the stored KSs of the users. Then the centralized server returns the IDs of the query-relevant group members to the query requestor and sends the identity of the query requestor to the query-relevant group members. 
The query requestor evaluates the query in cooperation with the query-relevant group members without involving the centralized server. The query requestor retrieves pSSs of the query-relevant area from the query-relevant group members, computes the SS of each road using the pSSs of the query-relevant group members, and finds the SR or the SRs depending on the query. \oc{For a GSR or a GFSR query, the source locations of all users of the group are sent to one of the group users who acts as a query requestor; the query is evaluated into the query requestor's device and after that, the relevant part of the query answer is sent to every member of the group. For example, for a GSR query, the SR from a group member's source location to a  fixed destination is sent to that group member's device.}
\section{Quantification of Safety} \label{ss:quantification-of-safety}
\subsection{Limitations of Existing Models.} Existing researches on safe routes have modeled safety in a variety of ways. 
The authors of~\cite{DBLP:conf/gis2011/crowdsafe,DBLP:conf/gis2014/treads} quantify the safety of a road network edge by simply considering the number of crimes in the particular distance buffer area of that edge. They do not consider the recency and the severity of crimes, the ratio between the unsafe visits and 
the safe visits by an individual user, and the fact that the impact of a crime decays with distance. Thus, the quantified SSs of roads in~\cite{DBLP:conf/gis2011/crowdsafe,DBLP:conf/gis2014/treads} fail to model the real-scenarios. The work in~\cite{DBLP:journals/is2016/Urban-navigation} improves the way to find the SS of a road network edge by considering the crime events of the last few days and weighting the crime events based on their distances from the road. None of the above works~\cite{DBLP:conf/gis2011/crowdsafe,DBLP:conf/gis2014/treads,DBLP:journals/is2016/Urban-navigation} allow the SS to vary in different parts of a road network edge, which is possible for long roads. 

In~\cite{r:safe:CrowdAdaptive}, the authors provide a more elaborate model of safety. However, the model suffers from the following limitations: (i) stores historical data and cannot address the constraint of the limited storage of the personal devices, (ii) does not differentiate the weights of crime events based on the frequency of the user's visits, (iii) only considers that the effect of a crime spreads to its nearby places only if no crime occurs there, (iv) does not provide a smooth decay of the effect of older events, rather takes the moving average of the events of the last few days, and discards the impact of previous events, (v) does not consider the severity of a crime event, and (vi) does not allow the SS to vary in different parts of an edge.

\subsection{Our Model.} We develop a model that overcomes the limitations of existing models. In our model, the travel experiences of users are converted into pSSs and then aggregated to infer the SSs of different areas. When a user visits an area, an event occurs. If the user faces an unsafe event, then that event is unsafe; otherwise, it is safe. Our model has the following properties:

\begin{enumerate}
  \item The safety of an area depends on the frequency of the users’ visits.
    If a user visits an area twice and faces unsafe events both times, then intuitively, that area is riskier than another area where a user visits 10 times and faces unsafe events two times among those visits.
    If a user visits an area 5 times safely, then that area is safer than another area that is visited once safely. 
  
  \item The safety of an area also depends on the safety of its nearby places. Therefore, if a user visits an area, the impact of the event is distributed to nearby areas.

  \item The safety of an area depends on the recency of the safe and unsafe events. If a user faces an unsafe event in an area, then that event’s effect decays with time. Similarly, if a user safely visits an area, after some time, that visit’s impact decays, and that is not perceived as safe as before.
  
  \item The safety of an area depends on the type and severity of an unsafe event. 
  
  \item The pSSs are not allowed to grow indefinitely. They are bounded within a maximum and a minimum value so that while aggregating, a single user's experience does not dominate the SS of an area.

  \item A road network edge may go through multiple grid cells and thus, can have different SSs.
\end{enumerate}

An important advantage of our model is that it is storage efficient as it does not store historical visit data of a user. 

\emph{Computation.} Let the impact of a safe event in the occurring area be ${\xi}^+$ and the impact of an unsafe one be ${\xi}^-$, where ${\xi}^+, {\xi}^- \in \mathbb{Z}$. ${\xi}^+$ is the same for all safe events. ${\xi}^-$ varies with the type and the intensity of the unsafe event or inconvenience faced. 

The impact ${\xi} (={\xi}^+/{\xi}^-)$ of an event reduces exponentially in nearby areas and becomes ${\xi}'$ as per the following equation:
${\xi}' = {\xi} * e^{-\frac{dist^2}{2h^2}}$, where the constant $h$ controls the spread of the event. $dist$ represents the distance of the event location from the grid cell. This equation is inspired by the Gaussian kernel density estimation~\cite{DBLP:journals/is2016/Urban-navigation}. 

The pSS, $\Psi$, of an area is bounded within $[-S, S]$ and $\Psi \in \mathbb{Z}$ and $0<{\xi}^+<S$ and $-S<{\xi}^-<0$. If an event occurs in a place for the first time then $\Psi = {\xi}$. If another event occurs there, then $\Psi = \Psi+{\xi}$. If an event ${\xi}$ occurs nearby, whose effect is ${\xi}'$ here, then $\Psi = \Psi+{\xi}'$. If $\Psi> S$ then $\Psi = S$ and if $\Psi < -S$ then $\Psi = -S$. Initially, $\Psi$ is set to \emph{unknown}.

A pSS decays in every $\Delta_d$ days. If the decay rate is $r_d$ and $\Psi \neq 0$, then after every $\Delta_d$ days, $\Psi$ becomes $\Psi = \Psi * r_d$, where $0<r_d<1$ and $r_d \in \mathbb{R}$. Therefore, the decay of older events' impacts is smooth. For example, if $r_d=0.8$ and $\Delta_d = 2$, then $\Psi=3$ becomes 2.4  after two days, and becomes 1.92 after two more days.

The values of parameters ${\xi}^+$, ${\xi}^-$, $S$, $\Delta_d$ and $r_d$ are the same for all users and decided centrally. For each grid cell, our model stores only two values: the pSS and when that pSS was last updated. Therefore, this model is storage-efficient and suitable for smart devices. The SS of an area is computed from the shared pSSs of the users (Definition~\ref{def_ss}).  

\nc{Our model allows incorporating the severity of unsafe event types in terms of impacts. For example, the impact of pick-pocketing and robbery are not the same. Similarly, the impact of accidents may vary depending on the underlying cause, like road sinuosity or slipperiness. The safest route may also not remain the same in different contexts (e.g., time and weather of the day, lightning condition, travel mode). Thus, users can store pSSs for different contexts (e.g., time and weather of the day, lightning condition, travel mode) and share them to compute context-specific safest routes.} 
\section{Indexing User Knowledge} \label{indexing-user-knowledge}


A user stores the pSS for every visited grid cell in the local storage and accesses it for evaluating the SR query. The centralized server stores the KSs of users for every grid cell and uses them for computing query-relevant groups. For efficient retrieval of pSSs and KSs, we use indexing techniques: local and centralized, respectively.

\subsection{Local Indexing.}
Storing pSSs for the whole grid in a matrix would be storage-inefficient because a user normally knows about some parts of the grid area. We adopt a popular indexing technique $R$-tree~\cite{rtree} for storing pSSs of the visited grid cells. The underlying idea of an $R$-tree is to group nearby spatial objects into minimum bounding rectangles (MBRs) in a hierarchical manner until an MBR covers the total space. 

For every visited grid cell, a user stores its pSS and the time of its last update. The last update time is required for decaying the pSS. To reduce the storage overhead, we combine nearby adjacent grid cells with an MBR, where the grid cells have the same SS and the difference between the last update time of two cells does not exceed a small threshold. We call this MBR as a supercell and each leaf node of an $R$-tree represents a supercell. Each leaf node stores the information of the coordinates of MBR, the pSS, and the average of the last update time of the considered grid cells of a supercell. The supercells are recursively combined into MBRs. The intermediary nodes of the $R$-tree store the coordinates of the MBR. The MBR of the root node of the $R$-tree represents the total grid area. Fig.~\ref{fig:local-indexing} shows an example of a grid and the  corresponding $R$-tree. For the sake of clarity, we do not show the last update times in the figure.

\begin{figure}
\centering
    \subfloat[The pSSs for a 4$\times$4 grid is stored in a modified $R$-tree\label{fig:local-grid}]{
        \scalebox{0.28}
        {\includegraphics[]{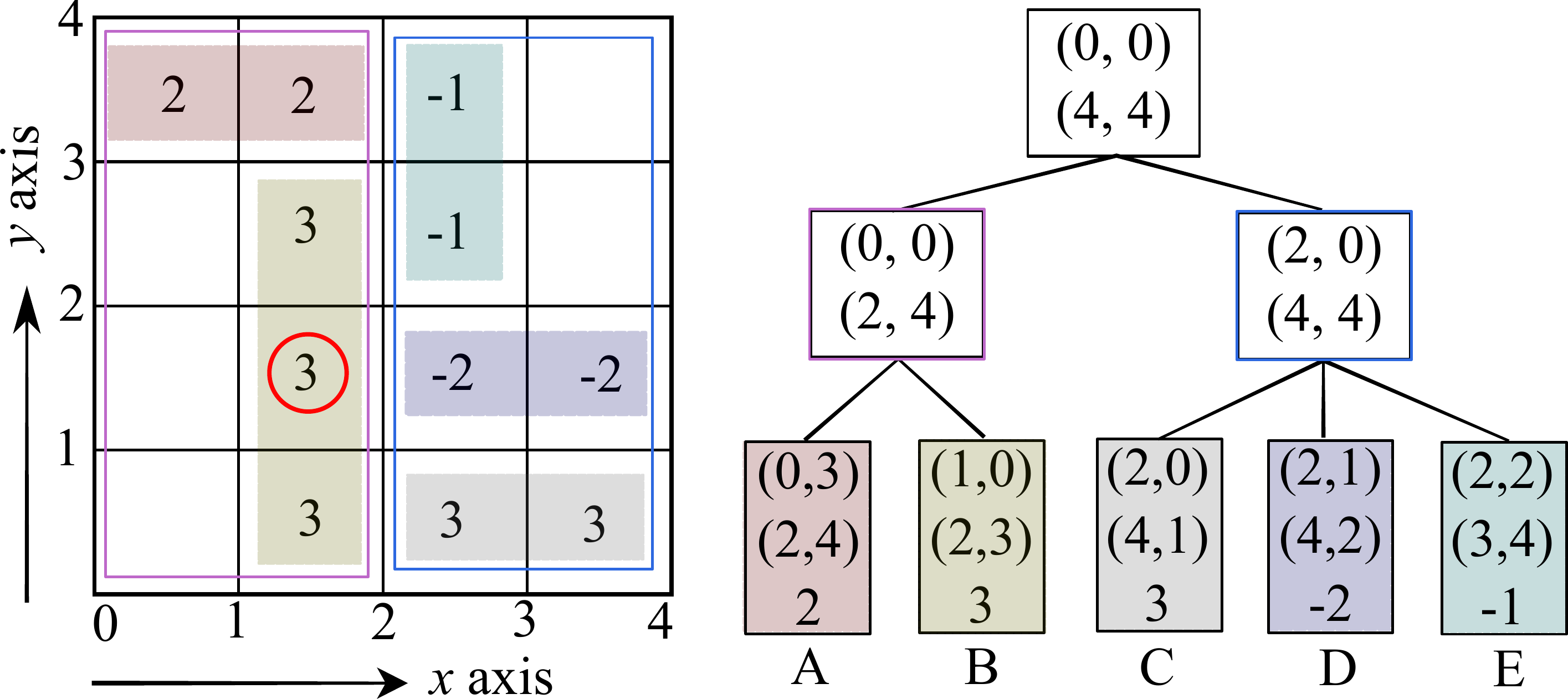}}
    }
    \vspace{2mm}
    \subfloat[A pSS changed from 3 to -2 and is updated in the  $R$-tree\label{fig:local-grid-update}]{
        \scalebox{0.28}
        {\includegraphics[]{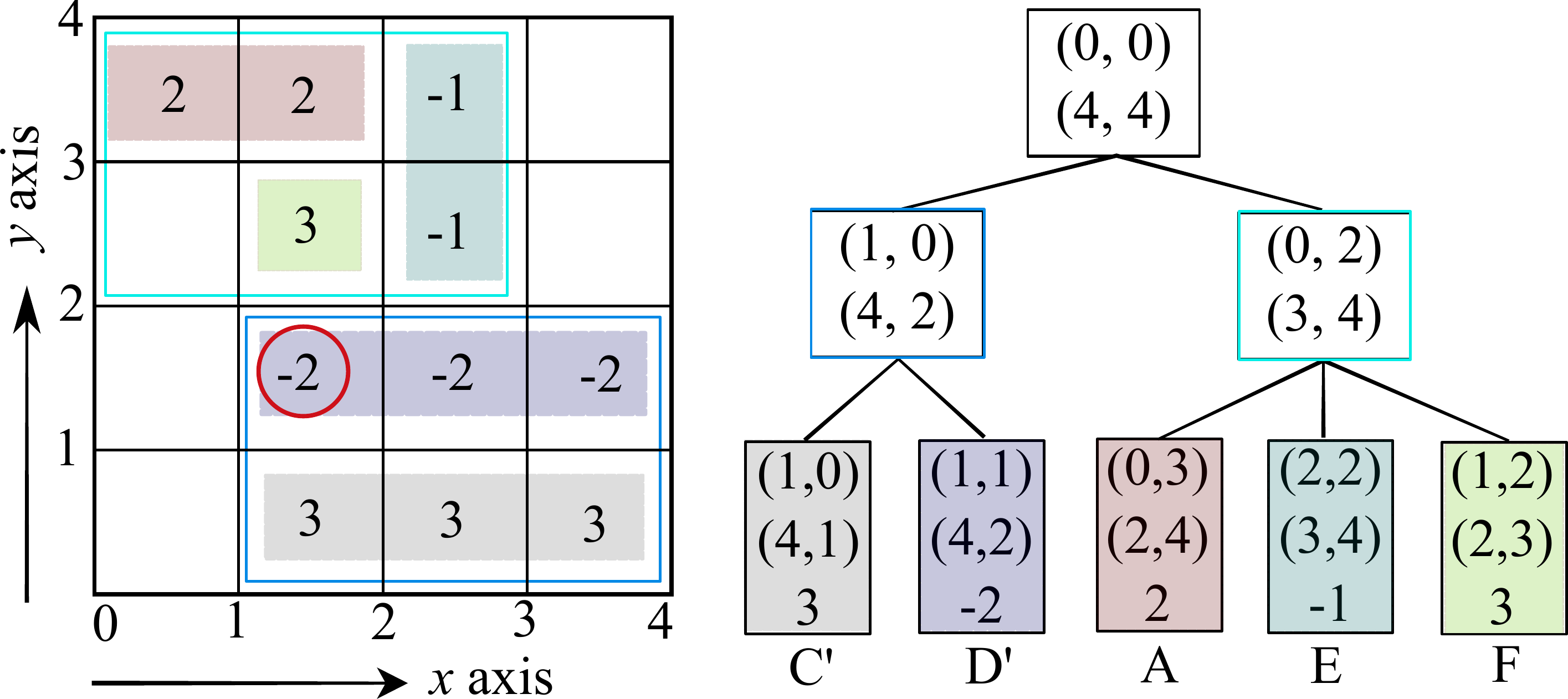}}
    } 
    \caption{A user's pSSs is stored in a modified $R$-tree}
	\label{fig:local-indexing}
\end{figure}

\emph{Supercell generation.} A traditional $R$-tree only considers the location of the spatial objects for grouping, whereas we consider the location, the pSS, and the last update time of the grid cells for grouping them into supercells. To compute the non-overlapping supercells, we scan the grid cells twice: row-wise and column-wise. For row-wise (or column-wise) scan, we maximize the number of grid cells included in a supercell row-wise (column-wise) and then take the supercells of the scan (row-wise or column-wise) that generates the minimum number of supercells. After computing the supercells for the leaf nodes, we insert them into a traditional $R$-tree. 
\nc{
For example, Table~\ref{tab:rowscan-colscan} shows the supercells that are created from row-wise and column-wise scans in Fig.~\ref{fig:local-grid}. 
Since both scans generates five supercells, we choose the supercells generated from the row-wise scan.
}
\begin{table}[!hbt]
\caption{\nc{Supercells generated from row-wise and column-wise scans in the 4$\times$4 grid of Fig.~\ref{fig:local-grid}}}
\label{tab:rowscan-colscan}
\centering
\nc{
\begin{tabular}{|C{1.2cm}|C{0.24\columnwidth}||C{1.2cm}|C{0.24\columnwidth}|}
    \hline
    \multicolumn{2}{|c||}{Row-wise scan} & \multicolumn{2}{c|}{Column-wise scan} \\ \hline
    Row No. & Supercells & Col No. & Supercells \\ \hline
    1 & A [(0,3) (2,4) +2] 
        & 1 & [(1,0) (4,1) +3] \\ \hline
    2 & B [(1,0) (2,3) +3] 
        & 2 & [(1,1) (2,3) +3], [(2,1) (4,2) -2] \\ \hline
    3 & C [(2,0) (4,1) +3], D [(2,1) (4,2) -2], E [(2,2) (3,4) -1]
        & 3 & [(2,2) (3,4) -1] \\ \hline
    4 & - 
        & 4 & [(0,3) (2,4) +2] \\ \hline
\end{tabular}
}
\end{table}

\emph{Supercell update.} To update the pSSs of grid cells for a visited route $R$, the following steps are performed:
\begin{figure}[!htb]
    \centering
    \includegraphics[width=0.6\columnwidth]{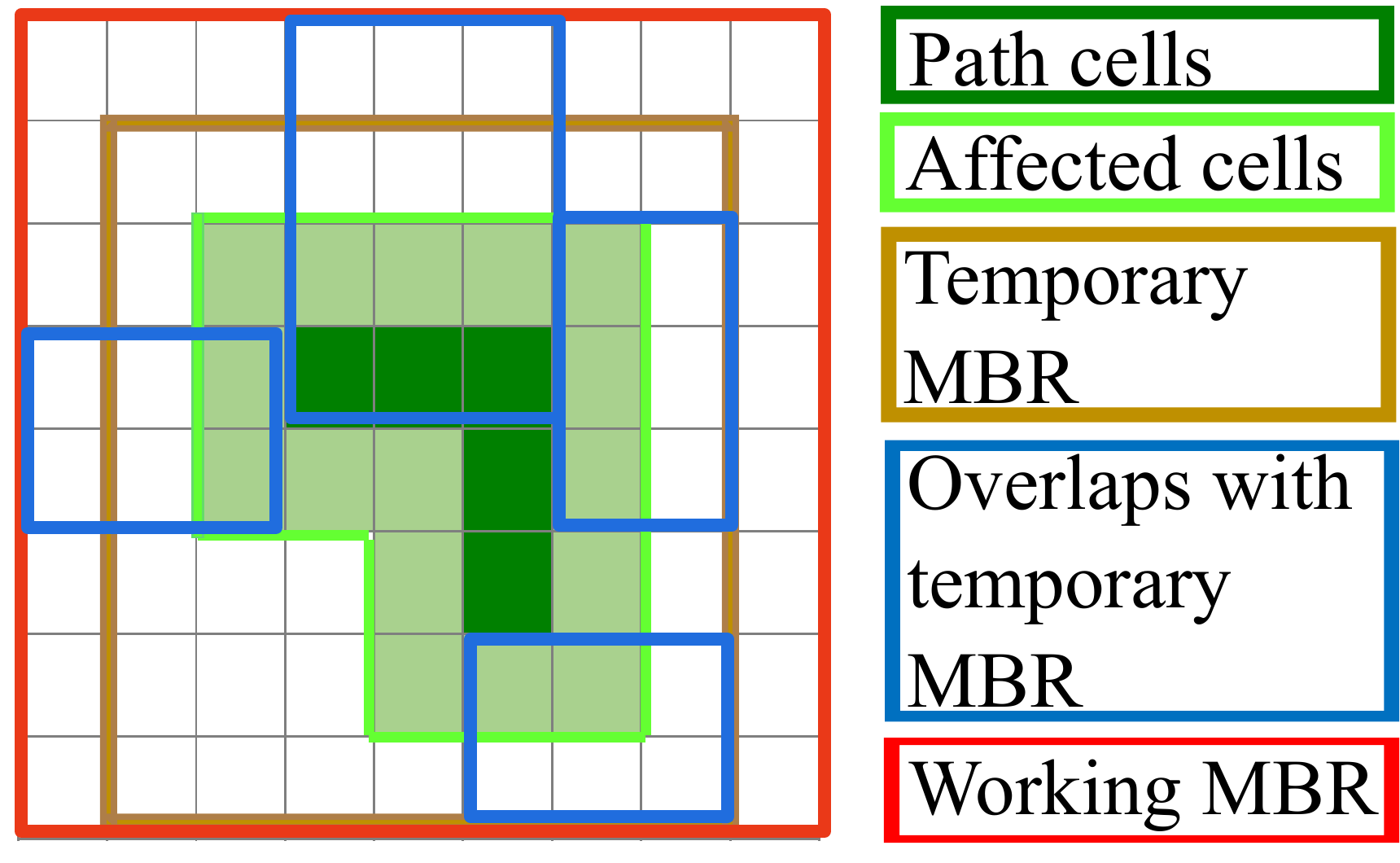}
    \caption{Necessary MBRs for updating a supercell}
    \label{fig:insert-in-rtree}
\end{figure}
\begin{compactitem}
\item \emph{Compute route cells and affected cells.} Compute the grid cells that overlap with $R$ as route cells. The affected cells include the route cells and their nearby cells (Fig.~\ref{fig:insert-in-rtree}).
\item \emph{Compute temporary MBR.} Find the temporary MBR that includes the affected cells and one extra grid cell besides each affected cell in the boundary (Fig.~\ref{fig:insert-in-rtree}). The reason behind considering an extra grid cell is to identify the adjacent existing supercells later.
\item \emph{Find overlapping supercells.} Find existing supercells that intersect with the temporary MBR. There are four overlapping supercells in Fig.~\ref{fig:insert-in-rtree}.
\item \emph{Compute working MBR.} Find the working MBR that includes those overlapping supercells and the affected cells (Fig.~\ref{fig:insert-in-rtree}).
\item \emph{Generate new supercells.} By considering the location, the pSS and the last update time of the grid cells included in the working MBR, generate the new supercells. 
\item \emph{Update $R$-tree.} Remove those overlapping supercells from $R$-tree and add the new supercells. Update the intermediary nodes based on the change in the leaf nodes.
\end{compactitem}
\nc{Fig.~\ref{fig:local-grid-update} shows the updated $R$-tree for the change of the pSS from 3 to -2 in a grid cell (shown with a red circle). Here, the temporary MBR consists of [(0,0) (3,3)] because of including one extra grid cell in each side. The overlapping supercells B, C, D, and E are shown in Fig.~\ref{fig:local-grid}. 
Hence, the working MBR consists of [(0,0) (4,4)]. After that, the computed new supercells A, C', D', E, and F are shown in Fig.~\ref{fig:local-grid-update}. Finally, the overlapping supercells are deleted and new supercells are inserted, which resulted in the $R$-tree of Fig.~\ref{fig:local-grid-update}. }

\nc{
\subsubsection{Complexity Analysis}
The worst case time complexity of supercell computation is $O({x_c}{y_c})$, where $x_c$ and $y_c$ are the number of rows and columns, respectively. If our modified $R$-tree contains $\mathcal{N}$ entries and $\mathcal{N}_{sc}$ new supercells are generated, then the worst case time complexity of $R$-tree update is $O(\mathcal{N}+x_c y_c+\mathcal{N}_{sc}\log \mathcal{N})$, where the working MBR has $x_c$ rows and $y_c$ columns.
The time complexities of the search and delete operations of our modified $R$-tree are the same as the traditional one, which is  $O(\mathcal{N})$ in the worst case.
}

\subsection{Centralized Indexing.} 
The KSs are accessed when the query-relevant groups are computed and updated when a user visits a new area. Since the probability is high that at least a user knows a grid cell area, we store each grid cell's data in a hash map with the grid cell's coordinates as key. For each grid cell, we store the user ids whose KS is 1 for the corresponding grid cell area.

\section{Query Evaluation} \label{our-approach}

\oc{In this section,  we present our query evaluation algorithms. For ease of understanding, we first discuss our algorithms to process the queries to find the SRs between two locations (Section~\ref{sec:SR}). Then we elaborate the generalized algorithms for our safe route planner that can address the SR query and its variants: the FSR query, the GSR query, and the GFSR query (Section~\ref{sec:GFSR}). In Section~\ref{sec:CL}}, we present our measure to assess the reliability of the SR query and its variants. 

\subsection{Evaluation of SR Queries} \label{sec:SR}
In our system, a query requestor retrieves the required pSSs from relevant users and evaluates the SR query. We develop direct and iterative algorithms to find the SR for a source-destination pair $s$ and $d$ within a distance constraint $\delta$. 

The number of possible routes between a source-destination pair can be huge. Retrieving the pSSs for all grid cells that intersect the edges of all possible routes and then identifying the SR would be prohibitively expensive. Our algorithms refine the search space and avoid exploring all routes for finding the SR. We present two optimal algorithms: \emph{\U{Dir}ect \U{O}ptimal \U{A}lgorithm (\dir{})} and \emph{\U{I}terative \U{O}ptimal \U{A}lgorithm} (\itr{}). \dir{} aims at reducing the processing time, whereas \itr{} increases the privacy in terms of the number of retrieved pSSs. Though a pSS does not reveal a user's travel experience (Section~\ref{privacy-analysis}) with certainty, the user's privacy is further enhanced by minimizing the number of shared pSSs with the query requestor. 

\emph{Query-relevant area $A_q$.} Our algorithms exploit the elliptical and Euclidean distance properties to find the query-relevant area $A_q$. We refine the search area using an ellipse where the foci are at $s$ and $d$ of a query and the length of the major axis equals $\delta$. According to the elliptical property, the summation of the Euclidean distances of a location outside the ellipse from two foci is greater than the length of the major axis. On the other hand, the road network distance between two locations is greater than or equal to their Euclidean distance. Thus, the road network distance between two foci, i.e., $s$ and $d$ through a location outside the ellipse, is greater than $\delta$. The refined search area $A_q$ includes the grid cells that intersect with the ellipse. $A_q$ enables us to select a query-relevant group and mitigate unnecessary processing and communication overheads and data exposure.

\emph{Query-relevant group $G_q$.}
A query-relevant group $G_q$ consists of the users whose KS is 1 for at least one grid cell in $A_q$. After receiving a query, the centralized server sends $G_q$ and the list $M_q$ of knowledgeable group members for every grid cell in $A_q$ to the query requestor. 

\begin{small}
\begin{algorithm}[!htb]
    
    
        
        

$N^\prime, A_q \gets$ compute\_query\_area$(s, d, \delta, N)$\;
$G_q, M_q \gets$ retrieve\_query\_group$(A_q)$\;
$SS_q \gets$ compute\_SS$(G_q, M_q, A_q)$\;
$N^{\prime\prime} \gets$ refine\_query\_area$(s, d, \delta, N^\prime, SS_q)$\;
$SR \gets$ compute\_safest\_route$(s, d, \delta, N^{\prime\prime}, SS_q)$\;
\Return $SR$\;

    \caption{\dir{}($s$, $d$, $\delta$, $N$)}
    \label{alg:dirOA}
\end{algorithm}
\end{small}

\subsubsection{Direct Algorithm (\dir{})} \label{sec:dirOA}

One may argue that we can simply apply an efficient shortest route algorithm (e.g., Dijkstra) for finding the SR by considering the SS instead of the distance as the optimizing criteria. However, it is not possible because the SR identified in this way in most of the cases may exceed $\delta$. 

Algorithm~\ref{alg:dirOA} shows the pseudocode for \dir{}. The algorithm starts by computing the query-relevant area $A_q$ and the query-relevant road network $N^\prime$ that is included in $A_q$. The edges in $N$ that go through grid cells in $A_q$ but those cells have not been visited by any user are not included in $N^\prime$. Then the algorithm retrieves the query-relevant group $G_q$ and the map $M_q$ of grid cell-wise knowledgeable group members from the centralized server. In the next step, the algorithm retrieves the pSSs from the group members and aggregates them to compute the SSs of the grid cell in $A_q$ using Function compute\_SS. 

After having the SSs for the grid cells in $A_q$, the algorithm further refines $N^\prime$ to $N^{\prime\prime}$ by pruning the edges that are guaranteed to be not part of the SR (Line 4). The idea of this pruning comes from~\cite{DBLP:journals/is2016/Urban-navigation}, where edges with the lowest SSs are incrementally removed until $s$ and $d$ become disconnected. To reduce the processing time, we exploit binary search for finding $N^{\prime\prime}$. Specifically, we compute the mid value $mid$ of the lowest and the highest SSs, i.e., $-S$ and $S$, and remove all edges that have SS lower than or equal to $mid$. Note that an edge can have more than one associated SSs as it can go through multiple grid cells. For binary search, we consider the minimum of these SSs as the SS of the edge. After removing the edges, we find the shortest route between $s$ and $d$ and check if the length of the shortest route satisfies $\delta$. If no such route exists, then the removed edges are again returned to $N^{\prime\prime}$, and the process is repeated by setting the highest SS to $mid$. On the other hand, if such a route exists, the process is repeated by setting the lowest SS to $mid+1$. The repetition of the process ends when the lowest SS exceeds the highest one. 

Finally, \dir{} searches for the SR within $\delta$ in $N^{\prime\prime}$ using Function compute\_safest\_route. \dir{} starts the search from $s$ and continuously expands it through the edges in the road network graph $N^{\prime\prime}$ until the SR is identified. The algorithm keeps track of all routes instead of the safest one from $s$ to other vertices in $N^{\prime\prime}$ as it may happen that expanding the SR from $s$ exceeds $\delta$ before reaching $d$. 

The compute\_safest\_route function uses a priority queue $Q_p$ to perform the search. Each entry of $Q_p$ includes a route starting from $s$, the road network distance of the route, and the distance associated with each SS in the route. The entries in $Q_p$ are ordered based on the safety rank, i.e., the top entry includes the SR among all entries in $Q_p$. Initially, routes are formed by considering each outgoing edge of $s$. Then the routes are enqueued to $Q_p$. Next, a route is dequeued from $Q_p$ and expanded by adding the outgoing edges of the last vertex of the dequeued route. The formed routes are again enqueued to $Q_p$. The search continues until the last vertex of the dequeued route is $d$.     
While expanding the search we prune a route if it meets any of the following two conditions: 

\begin{compactitem}
\item If the summation of the road network distance of the route and the Euclidean distance between the last vertex of the route and $d$ exceeds $\delta$.  
\item If the road network distance of the route exceeds the current shortest route distance of the last vertex from $s$. 
\end{compactitem}

Both pruning criteria guarantee that the pruned route is not required to expand for finding the SR. The current shortest route in the second pruning condition for a vertex $v$ from $s$ is determined based on the distances of the dequeued routes whose last vertex is $v$. Since the dequeued routes to $v$ are safer than a route that has not been enqueued yet, the route can be safely pruned if its length is greater than the current shortest route's distance.  

\begin{small}
\begin{algorithm}[!htb]
    $N^\prime, A_q \gets$ compute\_query\_area$(s, d, \delta, N)$\;
$G_q, M_q \gets$ retrieve\_query\_group$(A_q)$\;
$SS_q \gets \emptyset$, $Q_p \gets \emptyset$, $v \gets s$\;
\While{$v!=d$}{
${A_q}^\prime \gets$ find\_required\_cells$(v, N^\prime, A_q, SS_q)$\;
$SS_q \gets SS_q \bigcup$ compute\_SS$(G_q, M_q, {A_q}^\prime)$\;
$SR \gets$ get\_safest\_route$(v, N^\prime, SS_q, Q_p)$\;
$v \gets$ get\_last\_vertex$(SR)$\;
}
\Return $SR$\;

    \caption{\itr{}($s$, $d$, $\delta$, $N$)}
    \label{alg:itOA}
\end{algorithm}
\end{small}

\subsubsection{Iterative Algorithm (\itr{})}\label{sec:itOA}
\itr{} enhances user privacy by reducing the shared pSSs with the query requestor as it does not need to know the SSs of all grid cells in $A_q$. Algorithm~\ref{alg:itOA} shows the pseudocode for \itr{}. Similar to \dir{}, \itr{} computes $N^\prime$, $A_q$, $G_q$, and $M_q$. \itr{} does not apply the binary search to further refine $N^\prime$ as it avoids retrieving the pSSs of all grid cells in $A_q$. \itr{} gradually retrieves the pSSs from the group members only for the grid cells that are required for finding SR. Another advantage of \itr{} is that it only involves those group members who know about the required grid cells.

\itr{} iteratively searches for the SR in $N^\prime$ using a priority queue $Q_p$ like \dir{}. \itr{} expands the search by exploring the outgoing edges of $v$. Initially $v$ is $s$ and later $v$ represents the last vertex of the dequeued route from $Q_p$. In each iteration, \itr{} identifies the grid cells in ${A_q}^\prime$ through which those outgoing edges pass (Function find\_required\_cells) and computes their SSs by retrieving pSSs from the group members (Function compute\_SS). 
Next, inside Function get\_safest\_route, \itr{} forms the new routes by adding the outgoing edges of $v$ at the end of the last dequeued route and enqueues them into $Q_p$ if they are not pruned using the conditions stated for \dir{}. In the end, the function dequeues a route from $Q_p$ for using in the next iteration. The search for SR ends if the last vertex of the dequeued route is $d$.

As expected, \gitr{} increases the communication frequency (comm. freq.) of the query requestor with the query-relevant group members. \nc{We use a parameter $X_{it}$ such that 
when  $X_{it}>1$, Function find\_required\_cells identifies the grid cells of the outgoing edges of $v$ up to depth $X_{it}$ while applying the first and second pruning techniques, and Function compute\_SS collects the pSSs of those grid cells at once. This way $X_{it}$ decreases the comm. freq. by slightly increasing the number of pSSs retrieved. Note that the algorithm does not collect the pSSs of the edges to be expanded if their SSs are already known (i.e. their retrieved pSSs have been retrieved in the previous iteration).
}

\subsubsection{Complexity Analysis} \label{complexity-sr}
\oc{The compute\_safest\_route function in \dir{} algorithm can be drawn as a tree where the source node is the root and the destination node is in the last level. If the average branching factor is $b$ and the average depth of a route from $s$ to $d$ is $p$, then $Q_p$ is dequeued $1+b+b^2+\ldots+b^p$ times. The maximum possible number of elements at a time in $Q_p$ is $O(b^p)$. Therefore, if a binary min-heap is used for $Q_p$, then the runtime complexity of \dir{} is $O(b^p\cdot \log(b^p))$. Since we utilize two pruning techniques due to which the average depth $p$ reduces to $\frac{p}{r}$, the complexity becomes $O(b^\frac{p}{r} \cdot \log (b^\frac{p}{r}))$.} 

\oc{In \itr{}, edges are expanded till depth $X_{it}$ in Function find\_required\_cells along with the two pruning techniques. Therefore, the runtime complexity of this function is $O(b^\frac{X_{it}}{r})$. Thus, the runtime complexity of \itr{} is  $O(b^\frac{p}{r} \cdot ( \log (b^\frac{p}{r}) + b^\frac{X_{it}}{r} ) )$. 
}

\subsection{Safe Route Planners}\label{sec:GFSR}

\oc{To evaluate the SR query variants: the FSR query, the GSR query, and the GFSR query, we can independently find the SR between every source-destination pair by applying \dir{} or \itr{} and then select the set of routes that maximize road safety within the distance constraint. However, this na\"ive solution would require the traversal of the same road network multiple times and incur excessive processing overhead. We denote these na\"ive approaches as \emph{\U{N}a\"ive \U{Dir}ect \U{A}lgorithm} (\ndir{}) and \emph{\U{N}a\"ive \U{It}erative \U{A}lgorithm} (\nitr{}), respectively. To overcome the limitation of the na\"ive approaches, in this section, we present \emph{\U{G}eneralized \U{Dir}ect \U{A}lgorithm} (\gdir{}) and \emph{\U{G}eneralized \U{It}erative \U{A}lgorithm} (\gitr{}), that generalize \dir{} and \itr{}, respectively, and can find the answer of an SR query and its variants with a single search in the road network.} 

\begin{figure}
    \centering
    \includegraphics[width=0.5\textwidth]{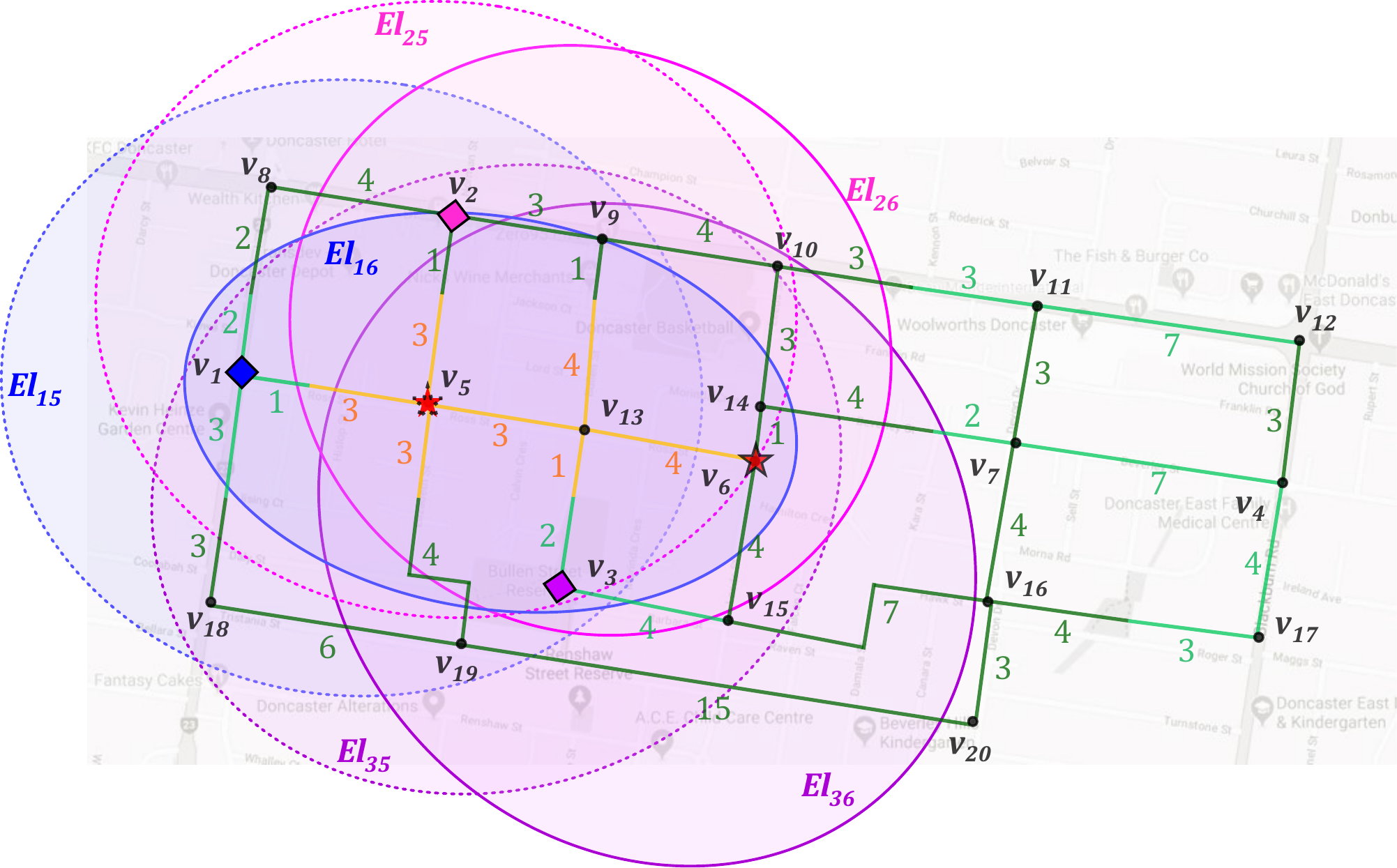}
    \caption{The query-relevant area comprises of ellipses $El_{15}$, $El_{16}$, $El_{25}$, $El_{26}$,  $El_{35}$ and $El_{36}$, where $L_S=\{v_1,v_2,v_3\}$, $L_D=\{v_5,v_6\}$ and $\delta=13$ units.}
    \label{fig:query-area}
\end{figure}

\emph{Query-relevant area and query-relevant group.}
\oc{Finding the answers of SR query variants with a single road network search is quite challenging, as the search space can be extremely large, especially for $n$ sources and $m$ destinations. By utilizing the distance constraint, we refine the search space and find the query-relevant area. For this purpose, for each pair of $s_i$ and $d_j$ ($s_i\in L_S$, $d_j\in L_D$), we find an ellipse whose foci are at $s_i$ and $d_j$ and the major axis is equal to $\delta$. There are $n\cdot m$ such ellipses. Therefore, our query-relevant area $A_q$ is the grid cells that intersect with any of those $n\cdot m$ ellipses. Fig.~\ref{fig:query-area} shows an example of the query-relevant area for a query with $L_S=\{v_1,v_2,v_3\}$, $L_D=\{v_5,v_6\}$ and $\delta=13$. Here, the road network area inside $A_q$ is $N^\prime$, which includes edges that go through grid cells in $A_q$ that have been visited by at least one user. The query-relevant group $G_q$ is formed by including any user whose knowledge score is 1 in at least one grid cell of $A_q$. }

\subsubsection{Generalized Direct Algorithm (\gdir{})} 
\oc{\gdir{}, a generalization of \dir{}, collects necessary pSSs at once, and efficiently finds the solution for the SR query and its variants. However, it is quite challenging to extend \dir{} for multiple source and/or destination locations and find the safest set of routes, $\mathcal{H^*}$. In addition, some of the search space refinement techniques of \dir{} are no longer valid here. Therefore, generalizing \dir{} is not trivial.} 

\oc{\gdir{} takes $L_S$ and $L_D$ as input and computes the query-relevant area $A_q$ and network $N^\prime$ and query-relevant group $G_q$ as described above. The binary search based refinement of \dir{} gradually removes the unsafer edges and finds the smallest road network that has the source and destination locations of the SR query connected. On the other hand, for FSR, GSR and GFSR queries when there are multiple source and destination locations, the binary search based refinement needs to ensure that at least one destination $d_j\in L_D$ is reachable from all $s_i\in L_S$ within $\delta$. To check the connectivity between a source and a destination, we can simply find the shortest route between the source-destination pair, which does not apply when we generalize the binary search for the SR query variants because the nearest destination of multiple sources is not known in advance and may not be the same for all source locations. Therefore, in the binary search based refinement of \gdir{}, we start expanding the routes from each $s_i$ and terminate the search once we reach any destination location $d_j\in L_D$ from all source locations $s_i\in L_S$ through routes having distances smaller than or equal to $\delta$.} 

\oc{To summarize, the binary search based refinement of \gdir{} works as follows. We compute the $mid$ value of the SS range $[-S,S]$ and remove all edges from $N^{\prime}$ that have SS lower than or equal to $mid$. After removing those edges, we confirm that at least one destination $d_j\in L_D$ is reachable from all $s_i\in L_S$ within $\delta$. If such a $d_j$ exist, then we repeat the process for SS range $[mid+1,S]$. However, if no such $d_j$ is found, we bring back the removed edges, and update the SS range to $[-S,mid]$. We repeat this refinement technique until the SS range diminishes to a single SS. After this refinement, we get the road network $N^{\prime\prime}$.}

\oc{Finally, we compute our answer $\mathcal{H^*}$ by exploring $N^{\prime\prime}$ efficiently. Though multiple source and destination locations can be present in the query, we search for $\mathcal{H^*}$ in $N^{\prime\prime}$ by efficiently using a single priority queue, $Q_p$. The routes stored in $Q_p$ are ordered based on their safety ranks similar to \dir{}. We start the search for $\mathcal{H^*}$ from all $s_i\in L_S$ simultaneously using $Q_p$. At first, we enqueue outgoing edges of each $s_i\in L_S$ in $Q_p$. Then, we continue dequeuing the SRs from $Q_p$ until we find a destination $d$ such that $d$ has been reached from all $s_i\in L_S$. Specifically, after dequeuing the SR, we form new routes from its outgoing edges and enqueue them. To make the search feasible for real-time query answering, similar to \dir{}, before enqueueing a route we check whether the route can be pruned. We directly apply the second pruning technique of \dir{} and generalize the first pruning technique as follows.
\begin{compactitem}
\item if the summation of the road network distance of a route $R$ and the Euclidean distance between the last vertex of $R$ and $d_j$ exceeds $\delta$ for all $ d_j\in L_D$, then prune the route $R$.
\end{compactitem}
This pruning technique only prunes routes that are not required to compute $H^*$.}

\subsubsection{Generalized Iterative Algorithm (\gitr{}).} 

\begin{small}
\begin{algorithm}[htb!]
    \oc{
$N^\prime, A_q \gets$ compute\_query\_area$(L_S, L_D, \delta, N)$\;
$G_q, M_q \gets$ retrieve\_query\_group$(A_q)$\;
$SS_q \gets \emptyset$, $Q_p \gets \emptyset$\;
$\mathcal{H}_1, \mathcal{H}_2, \ldots, \mathcal{H}_m,\mathcal{H}^* \gets \emptyset$\;
\ForEach{ $s_i \in L_S$ }{
    enqueue($Q_p, \{s_i\}$)\;
}
$SR \gets$ dequeue($Q_p$)\;
$v \gets$ get\_last\_vertex$(SR)$\;
\While{$\mathcal{H}^*=\emptyset$ $and$ $Q_p ~is ~not ~empty$}{
${A_q}^\prime \gets$ find\_required\_cells$(v, N^\prime, A_q, SS_q)$\;
$SS_q \gets SS_q \bigcup$ compute\_SS$(G_q, M_q, {A_q}^\prime)$\;
$SR \gets$ get\_safest\_route$(v, N^\prime, SS_q, Q_p)$\;
$\mathcal{H}_1, \mathcal{H}_2, \ldots, \mathcal{H}_m  \gets$ update\_route\_sets($SR$,$L_D$)\;
$\mathcal{H}^*  \gets$ compute\_answer($\mathcal{H}_1, \mathcal{H}_2, \ldots, \mathcal{H}_m$)\;
$v \gets$ get\_last\_vertex$(SR)$\;
}
\Return $\mathcal{H}^*$\;
}


    \caption{\gitr{}($L_S$, $L_D$, $\delta$, $N$)}
    \label{alg:itEA}
\end{algorithm}
\end{small}

\oc{\gitr{} efficiently finds the solution for the SR query or its variants where the necessary pSSs are collected iteratively. It minimizes the number of shared pSS with the query requestor. Generalizing \itr{} to \gitr{} for evaluating the SR query variants with an aim to further minimize the shared pSSs is not straightforward. The pseudocode of \gitr{} is shown in Algorithm~\ref{alg:itEA}.
}

\oc{In \gitr{}, we first compute the query-relevant area $A_q$ and road network $N^\prime$ (Function compute\_query\_area), and query-relevant group (Function retrieve\_query\_group) as explained above.}

\oc{Let $H_j$ store the SRs from each $s_i\in L_S$ to $d_j$ within $\delta$. $H_j$ is initialized to $\emptyset$ (Line 4). \gitr{} searches for the safest set of routes, $\mathcal{H^*}$, in $N^\prime$ using a single priority queue $Q_p$. As always, the routes in $Q_p$ are ordered based on their safety ranks. Initially, \gitr{} forms a route from each $s_i$ and enqueue it to $Q_p$ (Lines 5-7). Then \gitr{} dequeues a route and finds the last vertex of the dequeued route as $v$. In the Loop of Lines 10-17, \gitr{} expands the search by exploring the outgoing edges of $v$. \nc{Function find\_required\_cells identifies the grid cells in $A_q$ through which those outgoing edges pass and Function compute\_SS computes their SSs by retrieving pSSs from the group members.} 
Then, Function get\_safest\_route forms new routes by including outgoing edges of $v$ at the end of $SR$ and enqueues them in $Q_p$. The function applies the two pruning criteria used in \gdir{} to prune routes before enqueueing them. In the end, Function get\_safest\_route dequeues a route from $Q_p$ and returns it as $SR$.} 

\oc{Next, Function update\_route\_sets adds $SR$ to $H_j$ if $SR$ is the first dequeued route from $s_i$ to $d_j$. Because of the ordering scheme of $Q_p$, the first dequeued route from $s_i$ that reaches $d_j$ is the SR from $s_i$ to $d_j$.} \oc{Finally, if any route-set $H_j$ includes  the SR for all $s_i\in L_S$, Function compute\_answer returns the corresponding SRs as the query answer.} Otherwise, the function returns $\emptyset$. This returned value is set to $\mathcal{H}^*$. The search ends when $Q_p$ is empty or the answer is found.

\oc{Next, Function update\_route\_sets adds $SR$ to $H_j$ if $SR$ is the first dequeued route from $s_i$ to $d_j$. Because of the ordering scheme of $Q_p$, the first dequeued route from $s_i$ that reaches $d_j$ is the SR from $s_i$ to $d_j$.} \nc{Finally, if any route-set $H_j$ includes  the SR for all $s_i\in L_S$, Function compute\_answer returns the corresponding SRs as the query answer.} {Otherwise, the function returns $\emptyset$. This returned value is set to $\mathcal{H}^*$. The search ends when $Q_p$ is empty or the answer is found.}

\oc{As expected, \gitr{} also increases the communication frequency of the query requestor with query-relevant group members. Therefore, similar to \itr{}, we utilize the parameter $X_{it}$ that trades off between the communication frequency and the number of pSSs shared with the query requestor in \gitr{}.}

\subsubsection{Complexity Analysis} \label{complexity-gen}
\nc{For $n$ source locations, $m$ destination locations and using a binary min-heap for $Q_p$, the worst case time complexities of \gdir{} and \gitr{} are $O(n b^\frac{p}{r} ( m + \log(nb^\frac{p}{r}) ) )$ and $O(n b^\frac{p}{r} ( m + \log(nb^\frac{p}{r}) + b^\frac{X_{it}}{r} ) )$, respectively, where $b$ is the average branching factor and the average depth $p$ of a route reduces to $\frac{p}{r}$ due to pruning effect.} 

\subsection{Confidence Level}
\label{sec:CL}
The confidence level of a query answer expresses its reliability from the viewpoint of a query requestor. In our case, the more the number of users supports an answer, the more reliable it is to the query requestor. For a route $R$, its confidence level $CL(R)$ is expressed as follows.
$$CL(R) =\frac{100}{z}\times \frac{\sum_{c_i} l_i \times m_{c_i}}{dist(R) \times m}$$ 

Here, $l_i$ is the length of $R$ that crosses grid cell $c_i$ and $m_{c_i}$ is the number of group members who know $c_i$. Intuitively, the CL indicates the average percentage of query-relevant group members who know each unit length of the route. The query requestor might be satisfied when on average $z\%$ members among the $m$ query-relevant group members know about each unit length. Thus, we include $z$ in the definition of the CL. 

\begin{figure}[!h]
\vspace{-1mm}
\centering
\includegraphics[width=3.5cm]{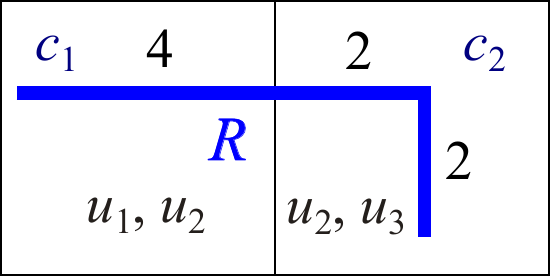}
\caption{\nc{CL calculation}}\label{fig:cl-example}
\vspace{-2mm}
\end{figure}    
\nc{ In Fig.~\ref{fig:cl-example}, a route $R$ goes through $c_1, c_2$, and $l_1=4$, $l_2=4$, $m_{c_1}=2$, $m_{c_2}=2$, $m=3$, and $dist(R)=8$. Thus, $CL(R) = min(\frac{100}{z}\times \frac{4\times 2+4\times 2}{8 \times 3}, 1) = min(\frac{100\times 0.67}{z},1)$. When a query requestor has a high requirement of $z$ to feel confident, it is difficult to satisfy, thus the CL is low, e.g., $CL(R) = min(0.67,1)=0.67$ for $z=100$. When the requirement is low, the CL increases, e.g., $CL(R)=min(2\times 0.67,1)=1$ for $z=50$. }

\nc{The result of an FSR query contains one SR, thus the computation of CL is the same as the SR query. For GSR or GFSR queries, there are $n$ SRs. Thus, we compute the CL for each of them and take the average as the CL of these queries.}
\section{Privacy Analysis} \label{privacy-analysis}

\subsection{Privacy Attacker Model} 
We assume a semi-honest setting, where participants follow the system protocol but are curious to infer unsafe event type (e.g., hijacking, sexual harassment) faced by a user from the shared information. In this setting, the participants do not send queries with the intention of exposing pSSs, they do not send the wrong pSSs, and they do not collude with each other or any centralized entity to infer the private data.


Anyone can play the role of an adversary.  The adversary can have the following knowledge: (i) the model to compute the pSSs and the algorithms to compute the SRs, (ii) the \nc{KSs (knowledge scores)} and pSSs of the query relevant area for the users who participate at the adversary's query evaluation, and (iii) the time and location of \emph{some} of the unsafe events that occur on roads. 

The adversary can know that a user has faced an unsafe event (\emph{not the type}) from a shared negative pSS of a grid cell. The location and time of this inferred unsafe event are also not certain as a grid cell can have a negative pSS due to a past or a recent unsafe event occurring at a nearby grid cell. A KS only discloses whether the user visited an area or not. Since the adversary does not have any knowledge about the frequency and the time of the user's visits (event) in a grid cell, even if the adversary knows that an unsafe event occurs at a grid cell, the user's KS or negative pSS for a grid cell does not provide any clue to associate the unsafe event type with the user.

Our solution protects user privacy by refraining an adversary from learning \textit{any unsafe event type faced by a user at any grid cell}. Since a negative pSS reveals that a user has faced an unsafe event (\emph{not the type}), our solution further enhances user privacy by minimizing the number of revealed pSSs for a user.



\subsection{Privacy Proof} 
Quantification model parameters, such as the impact of event types ($\xi^+$, $\xi^-_1, \xi^-_2, \cdots, \xi^-_c$), decay parameters ($r_d$ and $\Delta_d$), $h$ and $S$ contribute to the computation of pSSs based on the events that a user has encountered (please see Section~\ref{ss:quantification-of-safety}). 
Among these parameters, only $\xi^-_i$ can vary with unsafe event types and others (e.g., $r_d, \Delta_d$) only diminish the impact of $\xi$. If the impact $\xi^-$ of all unsafe event types are the same, unsafe event types cannot be inferred by an adversary from a shared pSS. When there are $c$ ($\geq 2$) different unsafe event impacts, the following lemma shows the condition for hiding a user's unsafe event type from others.

\begin{lemma}\label{lemma_1} 
Given a user's revealed pSS $\Psi$ for a grid cell, 
the unsafe event types that a user encounters cannot be inferred if (i) more than one event combinations cause the model to result in $\Psi$ and (ii) every unsafe event type is not included in at least one event combination that results in $\Psi$.    
\end{lemma}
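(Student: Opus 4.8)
The plan is to make precise the informal notion of an ``event combination'' and then reduce the lemma to a counting/indistinguishability argument. First I would set up notation: an event combination is a multiset of events (each event being one of $\xi^+,\xi^-_1,\dots,\xi^-_c$), together with the ages (in multiples of $\Delta_d$) of those events relative to the query time, where each event either occurs in the grid cell itself (contributing its full impact $\xi$) or in a nearby cell (contributing the decayed spatial value $\xi' = \xi\, e^{-dist^2/2h^2}$). Given such a combination, the quantification model of Section~\ref{ss:quantification-of-safety} deterministically produces a value $\Psi$ by summing the (time- and space-decayed) impacts and clamping to $[-S,S]$. Write $f(\mathcal{C})$ for this map from combinations $\mathcal{C}$ to pSS values. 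The adversary observes only $\Psi$, and knows $f$ (assumption (i) of the attacker model), so the adversary's knowledge about the true combination is exactly the preimage set $f^{-1}(\Psi)$.

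Next I would formalize what ``inferring an unsafe event type'' means: the adversary learns that event type $i$ was encountered by the user if and only if \emph{every} combination $\mathcal{C} \in f^{-1}(\Psi)$ contains at least one event of type $\xi^-_i$ --- i.e., $\xi^-_i$ is a ``certain'' event across the whole preimage. Conversely, the adversary \emph{cannot} confirm that the user faced event type $i$ precisely when there exists at least one combination $\mathcal{C}^\ast \in f^{-1}(\Psi)$ in which type $i$ does not appear. With this dictionary in place, the lemma is almost a tautology: condition~(i), that more than one event combination yields $\Psi$, guarantees $|f^{-1}(\Psi)| \geq 2$ so the true combination is not uniquely pinned down; condition~(ii), that every unsafe event type is absent from at least one combination in $f^{-1}(\Psi)$, says exactly that for each $i$ there is a witness $\mathcal{C}^\ast$ omitting type $i$, hence no type can be confirmed. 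I would then state the contrapositive-style conclusion: under (i) and (ii), for every unsafe event type $i$ the adversary's posterior support contains a combination without type $i$, so the adversary cannot deduce that the user encountered type $i$, which is the claim.

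I would then handle the one subtlety that keeps this from being purely definitional: the adversary also has side knowledge (iii), namely the time and location of \emph{some} unsafe events on roads, which could in principle shrink $f^{-1}(\Psi)$ to combinations consistent with that side information. I would argue that this does not change the conclusion as long as conditions (i) and (ii) are understood to hold \emph{relative to the adversary's admissible combinations} --- i.e., the witnesses $\mathcal{C}^\ast$ in (ii) are themselves consistent with whatever partial event information the adversary holds. In the semi-honest, non-colluding model the adversary's side information is about \emph{some} events, not about the frequency or timing of \emph{this} user's visits to the cell (as argued in the attacker-model subsection), so such witnesses continue to exist; I would spell this out as the reason the clamping to $[-S,S]$ and the unknown number/age of visits actually help, since they collapse many distinct combinations onto the same $\Psi$.

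The main obstacle, and the part I would be most careful about, is pinning down the quantification map $f$ precisely enough that ``event combination'' is unambiguous --- in particular deciding whether a ``combination'' records spatial displacements and event ages, and whether the non-injectivity of $f$ is witnessed by genuinely different \emph{type}-multisets (which is what the lemma needs) rather than merely different timings of the same types. Clamping and decay make $f$ highly non-injective, which is what we want, but I must ensure the two hypotheses are phrased so that the witnesses differ in which unsafe types appear, not just in nuisance parameters; once that is nailed down, the proof is the short preimage argument sketched above.
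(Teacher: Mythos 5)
Your proposal is correct and follows essentially the same route as the paper's proof: both treat the lemma as a preimage/indistinguishability argument in which condition (i) prevents the adversary from pinning down the actual combination and condition (ii) supplies, for each unsafe event type, a witness combination omitting it, and both dispose of the side-knowledge concern by noting that the adversary knows neither the user's own events nor all events on the grid cells. Your version is merely a more formal elaboration (introducing the map $f$ and its preimage explicitly) of the paper's two-sentence argument.
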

\begin{proof} 
An adversary cannot identify the actual event combination that results in $\Psi$ from \emph{multiple} event combinations and infer the unsafe event type of the user because the adversary does not know (i) any unsafe event of the user, and (ii) \emph{all} unsafe events that occur on grid cells. Again, if an unsafe event type is not included in at least one event combination that results in $\Psi$, then the adversary cannot infer the user's unsafe event type from $\Psi$.
\end{proof} 
 
Lemma~\ref{lemma_2} shows how our system satisfies Lemma~\ref{lemma_1} for any values of the safety quantification model parameters.

\newcommand{\rd}[1]{r_d^{\floor{\frac{#1}{\Delta_d}}}}
\begin{lemma} \label{lemma_2}
For any values of $\xi^-_i$ ($1 \leq i \leq c$, $c\geq 2$ and $\xi^-_{i+1} < \xi^-_{i}$), $\xi^+$, $r_d$, $\Delta_d$, $h$ and $S$, the conditions of Lemma~\ref{lemma_1} are satisfied for a user's revealed pSS $\Psi$.
\end{lemma}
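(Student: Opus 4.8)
The plan is to prove both conditions of Lemma~\ref{lemma_1} by an explicit construction that realizes $\Psi$ from carefully chosen event combinations, and then to argue that the construction never depends on the numerical values of $\xi^+,\xi^-_i,r_d,\Delta_d,h,S$ beyond the qualitative facts that a safe event has a fixed positive impact $\xi^+$, each unsafe type $i$ has a negative impact $\xi^-_i$, and every running value is clamped to $[-S,S]$. I would first reduce condition~(ii). Note that if, for a fixed unsafe type $t$, there exists \emph{any} event combination that results in $\Psi$ and does not use type $t$, then $t$ ``is not included in at least one event combination that results in $\Psi$''; since $c\ge 2$, it therefore suffices to produce $\Psi$ using only safe events together with unsafe events of a single type $t'\neq t$. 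So the whole lemma follows once I show the uniform statement: for every unsafe type $t'$, the value $\Psi$ is reachable by the model from a combination consisting only of safe events and type-$t'$ events (possibly also invoking the decay and nearby-cell spreading mechanisms, which are always available).

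For the construction I would walk the running pSS from the initial \emph{unknown} state. If $\Psi=S$, a sufficiently long run of safe events reaches the upper clamp and stays there, and varying its length already yields infinitely many distinct combinations; symmetrically $\Psi=-S$ is reached by a long run of type-$t'$ events. For interior $\Psi$ I would drive the value up to the clamp $S$ with safe events, then descend with type-$t'$ events, interleaving further safe events (and, if a finer adjustment is needed, a single decay step) so as to land exactly on $\Psi$; the two clamps at $S$ and $-S$, together with the integrality of $\xi^+$ and $\xi^-_{t'}$, are what make an exact landing possible, since they let the walk leave the sublattice generated by $\xi^+$ and $\xi^-_{t'}$ alone. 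Carrying this out for two distinct types $t_1\neq t_2$ produces two genuinely different combinations that both result in $\Psi$ (they differ because they use disjoint unsafe types, with the all-safe case handled as above), which is exactly condition~(i); and, as already observed, for each type $t$ one of the two witnesses — the one built from some $t'\neq t$ — establishes condition~(ii).

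To close the parameter-independence claim of Lemma~\ref{lemma_2}, I would simply observe that none of $r_d$, $\Delta_d$, or $h$ can obstruct the construction: decay only multiplies a stored value by a factor in $(0,1)$, and spreading only attenuates an impact toward $0$, so neither mechanism can stop safe events from reaching the upper clamp nor type-$t'$ events from descending from it. Hence the argument is verbatim the same for every admissible setting of the parameters, which is precisely what the lemma asserts.

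The step I expect to be the real obstacle is the claim that the descent can be made to terminate \emph{exactly} at $\Psi$ for arbitrary integer impacts: a plain $+\xi^+$/$-|\xi^-_{t'}|$ walk with clamping reaches only values congruent to $0$, $S$, or $-S$ modulo $\gcd(\xi^+,|\xi^-_{t'}|)$, so I must check that these residues already cover every value that can actually arise as a revealed pSS (intuitively they do, because the full model with all types active lies under a related modular constraint unless some single type already breaks it, in which case one uses that type), and, where they do not, that a decay step supplies the missing fractional adjustment for all $r_d$. Pinning down exactly which values can occur as a revealed pSS, and handling this residue/decay casework cleanly, is where the bulk of the effort will go.
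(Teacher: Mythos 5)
Your high-level strategy matches the paper's: for each unsafe type $k$ you exhibit an event combination realizing $\Psi$ that uses only safe events and type-$k$ unsafe events, which for $c\geq 2$ simultaneously yields condition (i) (at least two distinct combinations) and condition (ii) (every type is absent from some combination). The difference, and the problem, is in how $\Psi$ is hit exactly. You defer precisely the step that carries all the content: you concede that a clamped $+\xi^+/-|\xi^-_{t'}|$ walk only reaches residues of $0$, $S$, $-S$ modulo $\gcd(\xi^+,|\xi^-_{t'}|)$, and you hope that either these residues happen to cover all realizable pSS values or that ``a single decay step'' supplies the missing fractional adjustment. Neither is established, and neither is plausible as stated for \emph{arbitrary} parameter values (which is what the lemma claims): a single decay step multiplies the running value by one fixed factor $r_d$, which for generic $r_d$ lands nowhere useful, and the set of realizable $\Psi$ is generated by \emph{all} $c$ impacts together with arbitrary decay histories, so it is much richer than any single-type lattice. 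As written, the proposal is a proof sketch with its central step open.

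The paper closes this gap with a different and cleaner device that does not use the clamps at all: it decays the \emph{safe} event rather than the running total. Choosing $r$ so that $\Psi^+ = \xi^+ r_d^{\floor{r/\Delta_d}} \leq 1$, a safe event that occurred $r$ days ago contributes an arbitrarily small positive amount; taking $x_{k_0}$ type-$k$ events today with $x_{k_0}$ minimal so that $\Psi - x_{k_0}\xi^-_k \geq 0$, and then $y_r=\ceil{(\Psi - x_{k_0}\xi^-_k)/\Psi^+}$ such decayed safe events, puts the pre-floor value in $[\Psi,\Psi+1)$, so the floor lands on $\Psi$ exactly. This sidesteps your $\gcd$ obstruction entirely, because the effective step size of the positive increments can be made at most $1$ regardless of $\xi^+$, and it works uniformly in $r_d$, $\Delta_d$, $h$, $S$. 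If you want to complete your write-up, you should replace the clamp-and-descend descent with this decayed-fine-increment argument (or supply a genuine proof of the residue-coverage claim you are currently assuming).
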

\begin{proof} 
According to our safety quantification model, one possible way to form $\Psi$ is $\Psi = \floor{ x_{k_0} \cdot \xi^-_k + y_{r} \cdot \xi^+ \rd{r} }$, 
where $x_{k_0}, y_r\in \Z$ and $x_{k_0}$ ($y_r$) represents the number of times an unsafe (safe) event with impact $\xi^-_k$ ($\xi^+$) occurs zero ($r$) days ago, 
by choosing $x_{k_0}$, $r$ and $ y_r$ in the following way: 
$x_{k_0}$ is the smallest integer such that $\Psi - x_{k_0} \cdot \xi^-_k \geq 0$, $r$ is the smallest integer such that  $\Psi^+ = \xi^+ \rd{r} \leq 1$ and $y_r=\ceil{\frac{\Psi - x_{k_0} \cdot \xi^-_k}{\Psi^+}}$. 
For $c\geq 2$, $\Psi$ can be formed from at least $c$ event combinations using $c$ different values for $\xi^-_k$, i.e. $\xi^-_1, \xi^-_2, \cdots, \xi^-_c$, which satisfies the conditions of Lemma~\ref{lemma_1}. 
\end{proof}

An adversary may apply common sense reasoning to prune some event combinations based on the following observation: a user is unlikely to face an unsafe event type (i) more than once per day, and (ii) daily for a long time. In Lemma~\ref{lemma_4}, we configure $\xi^-_i (1\leq i\leq c)$ and $S$, and show that our solution guarantees Lemma~\ref{lemma_1} by not considering the event combinations under the above observation. 


\newcommand{\mPsi}[1]{\floor{\frac{\xi^-_{#1}(1-r_d^{\mu_{#1}})}{1-r_d}}}
\newcommand{\MPsi}[1]{\floor{\frac{ \xi^-_{#1} \Delta_d ( 1-r_d^{ \floor{ \frac{\mu_{#1}-1}{\Delta_d}}  } )  }{  1-r_d  }}}
\newcommand{\extra}[1] {(\floor{ \frac{#1}{\Delta_d}+1})\Delta_d -1-{#1} }

\begin{lemma} \label{lemma_4} 
For any values of $\xi^+$, $\xi^-_i$ ($1 \leq i \leq c$, $c\geq 2$ and $\xi^-_{i+1} < \xi^-_{i}$), $r_d$, $\Delta_d$ and $h$ such that $\xi^-_{i+1} \geq \mPsi{i}$ and $-S \geq \mPsi{c-1}$, the conditions of Lemma~\ref{lemma_1} are satisfied, where $\xi^-_i$ ($\xi^-_{c-1}$) occurs at most once per day for at most $\mu_i$ ($\mu_{c-1}$) consecutive days.
\end{lemma}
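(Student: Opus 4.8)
The plan is to verify the two conditions of Lemma~\ref{lemma_1} directly, by constructing an explicit family of \emph{admissible} event combinations for the revealed $\Psi$, where ``admissible'' means each unsafe type $i$ is used at most once per day and on at most $\mu_i$ consecutive days, i.e.\ exactly the combinations the adversary does \emph{not} discard by common-sense reasoning. Concretely, for each $k\in\{1,\dots,c\}$ I would build an admissible combination $E_k$ that evaluates to $\Psi$ and never uses unsafe type $k$; this gives condition (ii) of Lemma~\ref{lemma_1} at once, and distinctness of two of the $E_k$'s gives condition (i).

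First I would dispose of the easy case $\Psi\ge 0$: only safe events are needed, so every $E_k$ can avoid every unsafe type, and varying the number and age of the safe events already yields more than one admissible combination. For $\Psi<0$ the construction of $E_k$ has two parts, mirroring the proof of Lemma~\ref{lemma_2}. Part (a) is a ``burst'' of a single unsafe type $j\ne k$, placed at most once per day over at most $\mu_j$ consecutive days, chosen deep enough to pass just below $\Psi$. Part (b) is a tail of safe events placed one per day far enough in the past that each contributes decayed impact at most $1$ (so $\xi^+\rd{r}\le 1$ for the day $r$ used); as in Lemma~\ref{lemma_2}, adding enough of these nudges the running sum so the outer floor lands exactly on $\Psi$. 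The hypotheses are what make Part (a) feasible: $-S\ge\mPsi{c-1}$ says a full $\mu_{c-1}$-day burst of type $c-1$ alone already reaches (and clips at) $-S$, so for any $k\ne c-1$ a truncated type-$(c-1)$ burst realizes every target in $[-S,0)$; and the chain $\xi^-_{i+1}\ge\mPsi i$ lets me substitute, when $k=c-1$ forces type $c-1$ out, the strictly deeper type $c$ (or, when $c=2$, this is precisely the clause forcing $S$ small enough that the remaining low type still reaches $-S$). Because the actual decay acts only once every $\Delta_d$ days while $\mPsi{\cdot}$ charges decay every day, the true burst depths dominate these surrogate estimates, so all the required depth inequalities hold with room to spare.

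Assembling the pieces: each $E_k$ omits type $k$ and evaluates to $\Psi$, giving Lemma~\ref{lemma_1}(ii); and $E_c$, which we build from type $c-1$ alone, differs from $E_{c-1}$, which we build using type $c$, so at least two admissible combinations produce $\Psi$, giving Lemma~\ref{lemma_1}(i). I expect the main obstacle to be Part (a): proving that the partial sums of a single per-$\Delta_d$-decayed geometric burst, corrected by the $\le 1$ safe-event increments of Part (b), densely cover the whole feasible range $[-S,0]$ of pSS values, and doing so uniformly over which type $k$ is excluded --- in particular the boundary cases $k=c-1$, $k=c$, and $c=2$. This should reduce to a short monotonicity/telescoping argument on the decayed partial sums together with an explicit count of how many aged safe events suffice; the remaining verification that the usage of each type stays within once-per-day and $\le\mu_j$ days is routine bookkeeping.
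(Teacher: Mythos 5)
Your construction is essentially the paper's own proof: both realize $\Psi$ as a truncated once-per-day burst of a single unsafe type plus a tail of aged safe events whose decayed impacts fill the nonnegative remainder, both use $-S \geq \mPsi{c-1}$ to guarantee that types $c-1$ and $c$ can each cover the whole range $[-S,0)$ (so every type, including $c-1$ itself, is absent from some realizing combination), and both dispose of general $\Delta_d$ by noting that less frequent decay only deepens the bursts and adds combinations. The only cosmetic difference is that you index witnesses by the excluded type $E_k$ while the paper indexes them by the type actually used, which yields the same family of single-type combinations.
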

\begin{proof} 
If $\Delta_d=1$, 
then the smallest pSS formed from $\xi^-_k$ is $\Psi^{\star}_k =\floor{\xi^-_k (1 + r_d + r_d^2 + \cdots + r_d^{\mu_{k-1}})}=\mPsi{k}$ (Eq. 1), where an event of impact $\xi^-_k$ occurs daily once for $\mu_k$ days. 

\emph{Case $\Psi\geq\Psi^{\star}_k$ ($1\leq k < c$): }
According to our safety quantification model, one possible way to form $\Psi\geq\Psi^{\star}_k$ is $\Psi = \floor{\xi^-_k ( 1 + r_d + \cdots + r_d^{p_k}) + \xi^+ ( y_0\cdot 1 + y_1\cdot r_d + \cdots + y_q\cdot r_d^{q}) } = \floor{\Psi^- + \Psi^+}$ (Eq. 2), where $p_k, q, y_q\in \Z$, $\xi^-_i$ occurred daily since $p_k(<\mu_k)$ days ago, and $y_q$ represents the number of times $\xi^+$ occurred $q$ days ago, by choosing $p_k$, $y_0,\cdots,y_q$ in the following way: $p_k$ ($<\mu_k$) is the smallest integer such that $\Psi^+ = \Psi - \Psi^- \geq 0$, $y_0=\floor{\frac{\Psi^+}{\xi^+\cdot 1}}$, $y_1=\floor{\frac{\Psi^+ -  \xi^+\cdot y_0\cdot 1}{\xi^+ r_d^1}}$, $\cdots$, $y_q=\floor{\frac{\Psi^+ - \xi^+ (y_0\cdot 1 + y_1\cdot r_d^1 + \cdots + y_{q-1}\cdot r_d^{q-1})}{\xi^+ r_d^q}}$, 
and $\floor{\Psi^+ - \xi^+ (y_0\cdot 1 + y_1\cdot r_d + \cdots + y_{q}\cdot r_d^{q})}=0$. 
Here, $\Psi^+=\Psi - \Psi^- \geq 0$ is true even when $\xi^-_k$ is replaced with any impact $\xi^-_i$, $k < i\leq c$. 
Therefore, $\Psi\geq \Psi^{\star}_k$ can be formed from at least $c-k+1 \geq 2$ event combinations using $c-k+1 \geq 2$ different values for $\xi^-_k$ in Eq. 2, i.e., $\xi^-_i, \xi^-_{i+1}, \cdots, \xi^-_c$, which satisfies the conditions of Lemma~\ref{lemma_1}.

\emph{Case $\Psi < \Psi^{\star}_{k}$ ($k\geq c-1$): }
This case does not exist due to our constraint $\Psi \geq -S \geq \mPsi{c-1} = \Psi^{\star}_{c-1}$ following Eq. 1.

If $\Delta_d$ is any value, then the smallest pSS formed from $\xi^-_k$ is $\Psi^{\prime\star}_k \leq \MPsi{k} \leq \Psi^{\star}_k$, which increases the number of possible event combinations for $ \Psi \in [\Psi^{\prime \star}_k, \Psi^{\star}_k)$ while keeping the previous event combinations valid. Thus, the conditions of Lemma~\ref{lemma_1} are satisfied.

If $\xi^-_k$ occurs any number of times in a day for a user, then the smallest pSS $\Psi^{\prime\prime\star}_k$ formed from $\xi^-_k$ will be far smaller than $\Psi^{\star}_k$, which again increases the number of possible event combinations for those $\Psi^{\prime\prime\star}_{k}\leq \Psi< \Psi^{\star}_{k}$. Thus, the conditions of Lemma~\ref{lemma_1} are satisfied.
\end{proof} 

\nc{
Similar to Lemma~\ref{lemma_4}, the conditions of Lemma~\ref{lemma_1} can also be satisfied under other common sense reasonings (e.g., a user is unlikely to face a safe event more than once per day) by  deriving rules for choosing the values of our safety quantification model parameters.
}


\subsection{Privacy Enhancement}
In our system, the following measures further enhance the privacy of user data:
\begin{compactitem}
     \item {Our  solution refines the search space and minimizes the number of shared pSSs with a query requestor. 
     \item A user shares pSSs with the query requestor instead of a centralized server. A centralized server is fixed, and thus, a user would not feel comfortable sharing all pSSs with the centralized server, whereas a query requestor changes with a query, and the user only shares limited query-relevant pSSs with the query requestor.}  
     
     \item {A user can choose not to share her KSs for sensitive areas with the centralized server.}

     \item Our solution does not need to store the event data. It transforms a user's events into pSSs and stores them on the local device. Thus, an adversary can only retrieve a user's pSSs by applying an attack on the local device. 
\end{compactitem}

\color{black}

\makeatletter  
\newcommand\DummyPlot[8]{
\begin{tikzpicture}
\begin{axis}[
    width=8cm,
    height=9cm,
    xlabel style={font=\fontsize{24}{0}\selectfont, },
    ylabel style={font=\fontsize{24}{0}\selectfont},
    xlabel={#1},
    ylabel={#2},
    xtick=data,
    x tick label style={,font=\fontsize{24}{0}\selectfont},
    y tick label style={font=\fontsize{24}{0}\selectfont},
    legend style={at={(0.5,1.15)},anchor=south,legend columns=2,font=\fontsize{24}{0}\selectfont},
    mark options={mark size=5.5pt, line width=1.5pt},
    cycle list name={my six colors},
]
\addplot table [x=#3,y=#4, col sep=comma] {#6};
\addplot table [x=#3,y=#5, col sep=comma] {#6};
\addplot table [x=#3,y=#4, col sep=comma] {#7}; 
\addplot table [x=#3,y=#5, col sep=comma] {#7};
\addplot table [x=#3,y=#4, col sep=comma] {#8}; 
\addplot table [x=#3,y=#5, col sep=comma] {#8};
\end{axis}
\end{tikzpicture}
}
\newenvironment{DummyPlotEnv}{%
\noindent
  \def\addXTitle##1{\def\mXT{##1}}
  \def\addYTitle##1{\def\mYT{{##1}}}
  \def\addX##1{\def\mX{##1}}
  \def\addY##1{\def\mY{##1}}
  \def\addZ##1{\def\mZ{##1}}
  \def\addFP##1##2##3{
    \DummyPlot{\mXT}{\mYT}{\mX}{\mY}{\mZ}{##1}{##2}{##3}
  }
\noindent
}{}
\makeatother


\makeatletter  
\newcommand\SRPlotSix[8]{
\begin{tikzpicture}
\begin{axis}[
    width=8cm,
    height=9cm,
    xlabel style={font=\fontsize{24}{0}\selectfont, },
    ylabel style={font=\fontsize{24}{0}\selectfont},
    xlabel={#1},
    ylabel={#2},
    xtick=data,
    x tick label style={,font=\fontsize{24}{0}\selectfont},
    y tick label style={font=\fontsize{24}{0}\selectfont},
    legend style={at={(0.5,1.15)},anchor=south,legend columns=2,font=\fontsize{24}{0}\selectfont},
    mark options={mark size=5.5pt, line width=1.5pt},
    cycle list name={my six colors},
]
\addplot table [x=#3,y=#4, col sep=comma] {#6};
\addplot table [x=#3,y=#5, col sep=comma] {#6};
\addplot table [x=#3,y=#4, col sep=comma] {#7}; 
\addplot table [x=#3,y=#5, col sep=comma] {#7};
\addplot table [x=#3,y=#4, col sep=comma] {#8}; 
\addplot table [x=#3,y=#5, col sep=comma] {#8};
\end{axis}
\end{tikzpicture}
}
\makeatother
\makeatletter  
\newcommand\SRPlotThree[8]{
\begin{tikzpicture}
\begin{axis}[
    width=8cm,
    height=9cm,
    xlabel style={font=\fontsize{24}{0}\selectfont},
    ylabel style={font=\fontsize{24}{0}\selectfont},
    xlabel={#1},
    ylabel={#2},
    xtick=data,
    x tick label style={,font=\fontsize{24}{0}\selectfont},
    y tick label style={font=\fontsize{24}{0}\selectfont},
    legend style={at={(0.5,1.15)},anchor=south,legend columns=2,font=\fontsize{24}{0}\selectfont},
    mark options={mark size=5.5pt, line width=1.5pt},
    cycle list name={my three colors},
]
\addplot table [x=#3,y=#5, col sep=comma] {#6};
\addplot table [x=#3,y=#5, col sep=comma] {#7}; 
\addplot table [x=#3,y=#5, col sep=comma] {#8}; 
\end{axis}
\end{tikzpicture}
}
\makeatother

\makeatletter  
\newcommand\CPlotFive[4]{
\begin{tikzpicture}
\begin{axis}[
    width=8cm,
    height=9cm,
    xlabel style={font=\fontsize{24}{0}\selectfont, },
    ylabel style={font=\fontsize{24}{0}\selectfont},
    xlabel={#1},
    ylabel={#2},
    xtick=data,
    x tick label style={,font=\fontsize{24}{0}\selectfont},
    y tick label style={font=\fontsize{24}{0}\selectfont},
    legend style={at={(0.5,1.01)}, anchor=south,legend columns=3,font=\fontsize{22}{0}\selectfont},
    legend cell align={left},
    mark options={mark size=5.5pt, line width=1.5pt},
    cycle list name=my accuracy five colors,
] 
\addplot table [x=#3,y=r50, col sep=comma] {#4};
\addplot table [x=#3,y=r60, col sep=comma] {#4};
\addplot table [x=#3,y=r70, col sep=comma] {#4}; 
\addplot table [x=#3,y=r80, col sep=comma] {#4};
\addplot table [x=#3,y=r90, col sep=comma] {#4}; 
\end{axis}
\end{tikzpicture}
}
\makeatother

\makeatletter  
\newcommand\CPlotSix[4]{
\begin{tikzpicture}
\begin{axis}[
    width=8cm,
    height=9cm,
    xlabel style={font=\fontsize{24}{0}\selectfont, },
    ylabel style={font=\fontsize{24}{0}\selectfont},
    xlabel={#1},
    ylabel={#2},
    xtick=data,
    x tick label style={,font=\fontsize{24}{0}\selectfont},
    y tick label style={font=\fontsize{24}{0}\selectfont},
    legend style={at={(0.5,1.01)}, anchor=south,legend columns=3,font=\fontsize{22}{0}\selectfont},
    legend cell align={left},
    mark options={mark size=5.5pt, line width=1.5pt},
    cycle list name=my cl six colors,
]
\addplot table [x=#3,y=r50, col sep=comma] {#4};
\addplot table [x=#3,y=r60, col sep=comma] {#4};
\addplot table [x=#3,y=r70, col sep=comma] {#4}; 
\addplot table [x=#3,y=r80, col sep=comma] {#4};
\addplot table [x=#3,y=r90, col sep=comma] {#4};
\addplot table [x=#3,y=dir, col sep=comma] {#4}; 
\end{axis}
\end{tikzpicture}
}
\makeatother


\makeatletter 
\newcommand\FSRPlotPSS[4]{
\begin{tikzpicture}
\begin{axis}[
    width=8cm,
    height=8cm,
    xlabel style={font=\fontsize{26}{0}\selectfont,yshift=1ex},
    ylabel style={font=\fontsize{26}{0}\selectfont,xshift=-2ex},
    xlabel={#1},
    ylabel={#2},
    xtick=data,
    x tick label style={font=\fontsize{26}{0}\selectfont, , name=xlabel\ticknum},
    y tick label style={font=\fontsize{26}{0}\selectfont},
    legend style={at={(0.5,1.01)}, anchor=south,font=\fontsize{26}{0}\selectfont},
    legend cell align={left},
    mark options={mark size=5.5pt, line width=1.5pt},
    cycle list name=my three fsr colors,
]
\addplot table [x=#3,y=nSSND,col sep=comma] {#4};
\addplot table [x=#3,y=nSSNI,col sep=comma] {#4};
\addplot table [x=#3,y=nSSI,col sep=comma] {#4};
\end{axis}
\end{tikzpicture}
}
\makeatother
\makeatletter 
\newcommand\FSRPlotCF[4]{
\begin{tikzpicture}
\begin{axis}[
    width=8cm,
    height=8cm,
    xlabel style={font=\fontsize{26}{0}\selectfont,yshift=1ex},
    ylabel style={font=\fontsize{26}{0}\selectfont},
    xlabel={#1},
    ylabel={#2},
    xtick=data,
    x tick label style={font=\fontsize{26}{0}\selectfont, , name=xlabel\ticknum},
    y tick label style={font=\fontsize{26}{0}\selectfont},
    legend style={at={(0.5,1.01)}, anchor=south,font=\fontsize{26}{0}\selectfont},
    legend cell align={left},
    mark options={mark size=5.5pt, line width=1.5pt},
    cycle list name=my two fsr colors,
]
\addplot table [x=#3,y=cfNI,col sep=comma] {#4};
\addplot table [x=#3,y=cfI,col sep=comma] {#4};
\end{axis}
\end{tikzpicture}
}
\makeatother
\makeatletter 
\newcommand\FSRPlotRT[4]{
\begin{tikzpicture}
\begin{axis}[
    width=8cm,
    height=8cm,
    xlabel style={font=\fontsize{26}{0}\selectfont,yshift=1ex},
    ylabel style={font=\fontsize{26}{0}\selectfont},
    xlabel={#1},
    ylabel={#2},
    xtick=data,
    x tick label style={font=\fontsize{26}{0}\selectfont, , name=xlabel\ticknum},
    y tick label style={font=\fontsize{26}{0}\selectfont},
    legend style={at={(0.5,1.01)}, anchor=south,font=\fontsize{26}{0}\selectfont},
    legend cell align={left},
    mark options={mark size=5.5pt, line width=1.5pt},
    cycle list name=my four fsr colors,
]
\addplot table [x=#3,y=durND,col sep=comma] {#4};
\addplot table [x=#3,y=durNI,col sep=comma] {#4};
\addplot table [x=#3,y=durD,col sep=comma] {#4};
\addplot table [x=#3,y=durI,col sep=comma] {#4};
\end{axis}
\end{tikzpicture}
}
\makeatother
\section{Experiment}
\oc{We evaluate our safe route planner on real datasets with experiments. First, we discuss the experiment setup in Section~\ref{exp:setup}. Then, we analyze the performance of our safe route planner in Section~\ref{exp:perform}. Finally, in Sections~\ref{exp:missing} and~\ref{exp:query-effect}, we assess the impact of missing data on finding the answer of the SR query and variants, and the effectiveness of SRs, respectively.}

\subsection{Experiment setup}\label{exp:setup}

\subsubsection{Datasets}\label{exp:dataset}
We use datasets of three cities: Chicago (C), Philadelphia (P), and Beijing (B). To simulate the environment, for each dataset, we need the road network data, the crime data, and the users' visit data to different areas. The details of these  datasets are summarized in Table~\ref{tab:dataset-details}. We use datasets of three cities to show the performance of our solution irrespective of the variation in the number of users, check-in and crime data.

We use OpenStreetMap~\cite{r:OpenStreetMap} to download the road networks. We use the real crime data of Chicago~\cite{ChicagoCrimeDataset} and  Philadelphia~\cite{PhiladelphiaCrimeDataset}. For Beijing, instead of crime data, only the locations of crime hotspots' centers~\cite{BeijingCrimeNews} are available. We identify the crime hotspots of Chicago using k-means clustering~\cite{agarwal2013crime} and create hotspots of similar sizes around the Beijing hotspots' centers. We generate daily crimes around the hotspots of Beijing  following the distribution of Chicago. 

\begin{small}
\begin{table}[hbt!]
    \centering
    \addtolength{\leftskip} {-5mm} 
    \addtolength{\rightskip}{-5mm}
    \caption{Datasets}
    \label{tab:dataset-details}
    \begin{tabular}{|p{1.2cm}|R{0.9cm}|R{1.1cm}|R{1.1cm}|R{1cm}|R{1cm}|}
        \hline
        \multirow{2}{1.3cm}{Dataset (30 days)} & \multirow{2}{0.9cm}{\#Users} & \multirow{2}{1.1cm}{\#Check-ins} & \multirow{2}{1.1cm}{\#Crimes} & \multicolumn{2}{c|}{Road Network} \\ \cline{5-6}
            & & & & \#Nodes & \#Edges \\ \hline
        Chicago (\textbf{C}) & 3554 & 60922 & 30843 & 28468 & 74751 \\\hline
        Philadel-phia (\textbf{P}) & 2275 & 26923 & 82363 & 24800 & 59987 \\ \hline
        Beijing (\textbf{B}) & 87 & - & - & 33923 & 75131 \\ \hline
    \end{tabular}
\end{table}
\end{small}

For the users' visit data to different areas, we use the day-to-day Foursquare check-in dataset~\cite{r:checkin1,r:checkin2} for Chicago and Philadelphia, and real trajectory data of users for Beijing~\cite{zheng2011geolife}. We use crime and check-in data of the same 6 months for Chicago and Philadelphia and one-year trajectory data of 87 users for Beijing. \oc{Since all crime events are normally not reported, to increase the number of events, we map these data to one month.}

From check-in data, we generate the users' visits for Chicago and Philadelphia. Specifically, we take two consecutive check-ins of a user in a day and generate an elliptical area, where the foci of the ellipse are located at the check-in locations and the length of the major axis equals 1.25 times the distance between two check-in locations. We consider that the user visited the grid cells in the elliptical area. On the other hand, the user trajectories in Beijing directly provide the grid cell area visited by the users. Since most of the trajectory data is located around the center of Beijing city, we consider the area ([39.7, 40.12, 116.1, 116.6]) around the center of Beijing for our experiments.

We normalize the crime count in the range [0,1] per grid cell for each day. This count represents the crime probability of each grid cell. For each grid cell, according to the crime probability, we randomly associate the crime events with the visits of the users. Thus, the probability of experiencing crime in a grid cell increases for a user who visits the cell multiple times. The visits of the users that are not associated with any crime are considered as safe events. The pSSs are calculated based on the model described in Section~\ref{ss:quantification-of-safety}. We choose the model parameters in a way that satisfies Lemma~\ref{lemma_1} for every pSS.

\oc{To realistically generate the set of flexible destinations for an FSR query or a GFSR query, we need POI-location data. We use the Foursquare Check-in dataset~\cite{r:checkin1,r:checkin2} to get the POI locations of Chicago and Philadelphia. For Beijing, we synthetically generate the POI locations following the uniform distribution which is likely to simulate real POIs~\cite{DBLP:journals/pvldb/AbeywickramaCT16} and we maintain the ratio of the number of POIs vs. the number of road network vertices similar to Chicago. }

\subsubsection{Queries}\label{exp:queries}
\oc{For each experiment, we generate 100 queries randomly and take their average performance. 
To generate an SR query, we randomly choose a source location from the nodes of the road network, and then, choose the farthest node in the road network within query distance, $d_q$ as the destination.
To generate an FSR query, we randomly choose the source location from the nodes of the road network and choose $m$ farthest POIs in the road network within query distance, $d_q$, as destinations. 
We generate a GSR query within a square query area where each side of the square covers $A_q^r$ percent of $d_G$ grid cells. We randomly choose four source locations, one from each side of the square area, from the vertices of the road network. Other $n-4$ source locations are chosen randomly from inside the query area. The destination location is the group nearest POI of the $n$ source locations.
We generate a GFSR query in the same way as a GSR query. The only difference is that $m$ destination locations are the $m$ group nearest POIs of the $n$ source locations.
}

\subsubsection{Parameters}\label{exp:params}
We show the parameters' default values and ranges in Table~\ref{tab:params}. \nc{We divide the total space into $d_G \times d_G$ grid cells. 
} We vary $d_G$ to show the impact of the grid resolution (and the grid cell area) on our solution performance. The distance constraint ratio $\delta_R$ represents the ratio of the allowed road network distance of the safest route and the road network distance of the shortest route from $s$ to $d$. Distance constraint $\delta$ is derived from $\delta_R$ during the query processing time. We keep $\delta_R$ at most 1.5 as a user may not feel comfortable traveling longer than 1.5 times of the shortest distance. The parameter $z$ is used for confidence level (Section~\ref{confidence-level}).
\oc{We vary the number of source locations, $n$, within \{5, 10, 15, 20\}, and choose $n=10$ as the default value, which reflects practical group meetup scenarios. Setting the number of destinations, $m$, greater than 25 will not be realistic. Therefore, we vary $m$ within \{5, 10, 15, 20, 25\}, and the default value is set to 15.} Similar to~\cite{DBLP:journals/is2016/Urban-navigation}, we vary the query distance $d_q$, the Euclidean Distance between $s$ and $d$, from 1 to 5. \oc{We vary the query area parameter within \{5, 10, 15, 20\} and the default value is set to 10.}
 \begin{table}[htb]
    \caption{Parameter settings}
    \label{tab:params}
    \centering
\begin{tabular}{|p{3.5cm}|p{2.5cm}|C{0.9cm}|}
    \hline
    Metric & Range & Default \\
    \hline
    Grid Size, $d_G$ & 300, 500,  800  &  500\\
    \hline
    Distance Constraint Ratio, $\delta_R$ & 1.1, 1.2, 1.3, 1.4, 1.5 & 1.2 \\
    \hline
    \oc{Confidence Level Parameter, $z$ (\%)} &  25, 50, 75, 100 &  50 \\
    \hline
    \oc{\#Source Locations, $n$} & 5, 10, 15, 20 & 10 \\
    \hline
    \oc{\#Destination Locations, $m$} &  5, 10, 15, 20, 25 &  15 \\
    \hline
    Query Distance, $d_q$ (km) & 1, 2, 3, 4, 5 & 5\\
    \hline
    \oc{Query Area Parameter, $A_q^r$ (\%)} &  5, 10, 15, 20 &  10 \\
    \hline
\end{tabular}
\end{table}

For each experiment, we set $S=10$ because a smaller $S$ does not capture the variation of safety, and a large $S$ increases the computation cost by adding insignificant detail. \oc{Thus the pSS range becomes [-10,10].} Our system is written in Java. We run our experiments on an Intel Core i7-7770U 3.60 GHz CPU and 16GB RAM machine.

\subsection{Performance Analysis} \label{exp:perform}
\oc{ Since there is no solution that can find the SRs \emph{in our problem setting} (please see Section~\ref{related-works}), we evaluate the performance of our query processing algorithms by varying a wide range of parameters.} 

We compare the algorithms based on runtime, communication frequency per involved group member (\textit{comm. freq.}), and the total number of revealed pSSs. \oc{The runtime of a query includes of the time to calculate the distance constraint $\delta$ from $\delta_R$. We assume the pSSs are retrieved parallelly from the group members. The KSs and pSSs are updated in offline (i.e., when a user’s device is idle), not during query evaluation. Therefore, they do not affect the query response time.} In addition, note that the fewer the number of revealed pSSs, the better the privacy is. 

\oc{In Section~\ref{exp:sr}, we show the performance of our proposed direct and iterative algorithms (i.e., \dir{} and \itr{}, respectively) to evaluate SR queries. In Sections \ref{exp:fsr}, \ref{exp:gsr}, and \ref{exp:gfsr}, we show the performance of generalized direct and iterative algorithms (i.e., \gdir{} and \gitr{}, respectively) to evaluate the SR query variants: FSR, GSR and GFSR queries. There is no difference between \dir{} and \gdir{} (or \itr{} and \gitr{}) when \gdir{} (or \gitr{}) is applied for SR queries, and thus, we do not show the performance of the generalized algorithms for SR queries separately. SR query variants can be also straightforwardly evaluated by applying the SR query processing algorithm independently using the na\"ive algorithms: \ndir{} and \nitr{} (please see Section~\ref{sec:GFSR} for details). In Sections~\ref{exp:fsr}, \ref{exp:gsr} and \ref{exp:gfsr}, we compare our efficient \gdir{} and \ndir{} with \gitr{} and \nitr{}, respectively.} 

\oc{Though \nitr{} applies \itr{} multiple times, in our implementation, we improve the communication frequency of \nitr{} by ensuring that if the pSSs of a grid cell has been collected for a source-destination pair, then the SS of that grid cell is saved and is not collected or computed again. }

\begin{figure}[!hbt]
    \centering
    \input{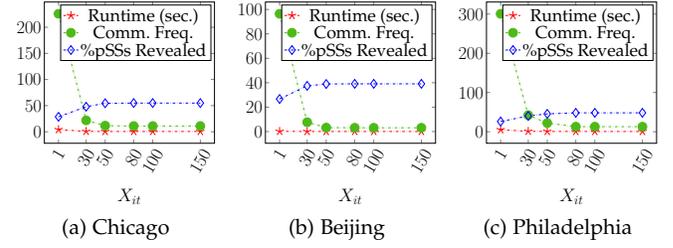}
    \caption{{Choosing default value $X_{it}=40$ based on the effects of $X_{it}$} for SR queries}
    \label{fig:plt:vary-x}
\end{figure}
\noindent\emph{Choosing the default value of $X_{it}$.} 
The parameter $X_{it}$ significantly impacts the performance of \itr{}, \gitr{} and \nitr{}. \oc{Therefore, we vary $X_{it}$ to choose the default value of $X_{it}$ for all query types. In Fig.~\ref{fig:plt:vary-x}, we only plot the results for SR queries.} Fig.~\ref{fig:plt:vary-x} shows clear trade-offs among performance metrics for Chicago. The runtime decreases (desirable), the communication frequency decreases (desirable) and more pSSs are revealed (undesirable) with the increase of $X_{it}$. Thus, we have to carefully choose a value for $X_{it}$ so that the communication frequency is low, and the runtime and the number of revealed pSSs are reasonable. From the figure, it is clear that there is a saturation point after which all three performance metrics do not change much. Therefore, we choose $X_{it}=40$ as the default value for all datasets because we see a sharp decrease in the communication frequency and then for $X_{it}>=40$ there is not much change. Please note that $X_{it}$ can be used as a regulator to control the runtime, communication frequency and data exposure. In scenarios where privacy matters more, we can  choose a lower value for $X_{it}$ and in scenarios where the communication frequency matters, we should choose the value of $X_{it}$ for which communication frequency reaches a saturated value. We decide the value of $X_{it}=40$ for FSR, GSR and GFSR queries in the same way.

\nc{In Section~\ref{exp:rtree}, we show the performance of our modified $R$-tree in terms of its different operations.}

\subsubsection{SR Queries} \label{exp:sr}
\begin{figure}[!bt]
    \pgfplotsset{
    compat=newest,
    /pgfplots/legend image code/.code={%
        \draw[mark repeat=3,mark phase=3,#1] 
            plot coordinates {
                (0cm,0cm) 
                (0.4cm,0cm)
                (0.8cm,0cm)
                (1.2cm,0cm)
                (1.6cm,0cm)%
            };
    },
}
\centerline{\begin{minipage}[t]{0.9\columnwidth}
\centering
\scalebox{0.35}{
\begin{tikzpicture} 
    \begin{axis}[
        hide axis,
        xmin=0,
        xmax=0,
        ymin=-1,
        ymax=-1,
        legend style={at={(0.5,1.15)},anchor=south,legend columns=2,font=\fontsize{24}{0}\selectfont,},
        mark options={mark size=6pt, line width=2pt},
        cycle list name=my six colors,
        ]
        \addplot {0};
        \addlegendentry{\dir{}-C};
        \addplot {0};
        \addlegendentry{\itr{}-C};
        \addplot {0};
        \addlegendentry{\dir{}-B};
        \addplot {0};
        \addlegendentry{\itr{}-B};
        \addplot {0};
        \addlegendentry{\dir{}-P};
        \addplot {0};
        \addlegendentry{\itr{}-P};
    \end{axis}
\end{tikzpicture}
}
\end{minipage}
}
    \input{safepath/plots/qe-pp}
    \vspace{1mm}
    \input{safepath/plots/qe-cf}
    \vspace{1mm}
    \input{safepath/plots/qe-rt}
    \caption{\dir{} vs. \itr{} in terms of privacy (\#pSSs revealed) and computation cost (comm.freq. and runtime) for varying $\delta$, $d_q$ and $d_G$}
    \label{fig:plt:qe}
\end{figure}
\oc{Fig.~\ref{fig:plt:qe} shows the comparison of \dir{} and \itr, where we append the initial letter of the dataset after the algorithm name with a hyphen.}

Fig.~\subref*{plt:pp-cDist}-\subref*{plt:pp-grid} shows that \itr{} reveals around 50\% of the revealed pSSs in \dir. The number of revealed pSSs increases with an increase of $\delta_R$ and $d_q$ because the route length increases. The number of revealed pSSs also increases for large $d_G$ because the number grid cells through which the route passes increases. Please note that the number of revealed pSSs increases more rapidly for \dir{} than that of \itr.

Fig.~\subref*{plt:cf-cDist}-\subref*{plt:rt-grid} compare \dir{} and \itr{} in terms of communication frequency  and runtime. The communication frequency  is always 1 for \dir{} as the group-members are requested once to provide pSSs. For \itr, the communication frequency is on average 15 times; it can be as high as 45.2 times (Philadelphia dataset). To check how reasonable this is, we ran an experiment: a message is sent from one device to another using Firebase Cloud Messaging service and a reply from recipient is received. This is a cycle, and we ran 500 such cycles which took a total of 86641 ms, so on average, 173.28 milliseconds per cycle. Therefore, 45.2 communications take $45.2 \times 173.28$ ms $\approx$ 8 seconds, which is reasonable. The communication frequency  for \itr{} increases with an increase of $\delta$, $d_q$ and $d_G$. For $\delta$ and $d_q$, the reason behind the increased communication frequency  is the increased route length, whereas for $d_G$, the reason is the number of required grid cells to compute the SR increases.

The runtime of \dir{} is very low (on average 0.3 second) for all datasets. \nc{Though the runtime of \gitr{} increases with the increase in $\delta$, $d_q$  and $d_G$, they are reasonable (on average 0.54 seconds).} Therefore, we conclude that both \dir{} and \itr{} provide practical solution for the SR queries and show a trade-off between runtime and privacy.

\subsubsection{FSR Queries}\label{exp:fsr}
\begin{figure}[tbh!]
    \centering
    \pgfplotsset{
    compat=newest,
    /pgfplots/legend image code/.code={%
        \draw[mark repeat=3,mark phase=3,#1] 
            plot coordinates {
                (0cm,0cm) 
                (0.4cm,0cm)
                (0.8cm,0cm)
                (1.2cm,0cm)
                (1.6cm,0cm)%
            };
    },
}

\scalebox{0.35}{
\begin{tikzpicture} 
    \begin{axis}[
        hide axis,
        xmin=0,
        xmax=0,
        ymin=-1,
        ymax=-1,
        legend style={at={(1.5,1.15)},anchor=south,legend columns=4,font=\fontsize{24}{0}\selectfont,},
        mark options={mark size=6pt, line width=2pt},
        cycle list name=my four fsr colors,
        ]
        \addplot {0};
        \addlegendentry{\ndir{}};
        \addplot {0};
        \addlegendentry{\nitr{}};
        \addplot {0};
        \addlegendentry{\gdir{}};
        \addplot {0};
        \addlegendentry{\gitr{}};
    \end{axis}
\end{tikzpicture}
}
    \input{safepath/plots/fsr-nPOIs-chi}
    \input{safepath/plots/fsr-nPOIs-phl}
    \input{safepath/plots/fsr-nPOIs-small-beijing}
    \caption{The effect of varying $m$ on the performance metrics for FSR queries (all datasets)}
    \label{fig:exp:fsr:m}
\end{figure}
\begin{figure}[tbh!]
    \centering
    \pgfplotsset{
    compat=newest,
    /pgfplots/legend image code/.code={%
        \draw[mark repeat=3,mark phase=3,#1] 
            plot coordinates {
                (0cm,0cm) 
                (0.4cm,0cm)
                (0.8cm,0cm)
                (1.2cm,0cm)
                (1.6cm,0cm)%
            };
    },
}

\scalebox{0.35}{
\begin{tikzpicture} 
    \begin{axis}[
        hide axis,
        xmin=0,
        xmax=0,
        ymin=-1,
        ymax=-1,
        legend style={at={(1.5,1.15)},anchor=south,legend columns=4,font=\fontsize{24}{0}\selectfont,},
        mark options={mark size=6pt, line width=2pt},
        cycle list name=my four fsr colors,
        ]
        \addplot {0};
        \addlegendentry{\ndir{}};
        \addplot {0};
        \addlegendentry{\nitr{}};
        \addplot {0};
        \addlegendentry{\gdir{}};
        \addplot {0};
        \addlegendentry{\gitr{}};
    \end{axis}
\end{tikzpicture}
}
    \input{safepath/plots/fsr-eDist}
    \input{safepath/plots/fsr-cDist}
    \input{safepath/plots/fsr-xy}
    \caption{The effect of varying $d_q$, $\delta_R$ and $d_G$ on the performance metrics for FSR queries (Chicago dataset)}
    \label{fig:exp:fsr:chi}
\end{figure}

\oc{To compare the performance of our efficient algorithms with the na\"ive ones for FSR queries, we vary $m$, $\delta_R$, $d_q$ and $d_G$. Since the query relevant area of \ndir{} and \gdir{} are the same, they reveal the same number of pSSs and Fig.~\ref{fig:exp:fsr:m} and \ref{fig:exp:fsr:chi} only show the value for \ndir{}. In addition, we omit the communication frequency of \ndir{} and \gitr{} in those plots, since the value is 1.0 for both of them. We follow the same conventions while plotting results in Sections~\ref{exp:gsr} and \ref{exp:gfsr}.}

\oc{Fig.~\subref*{plt:fsr-pp-nPOIs-chi}, \subref*{plt:fsr-pp-nPOIs-phl}, and \subref*{plt:fsr-pp-nPOIs-small-beijing} show that the increasing number of destinations, $m$, does not always increase the number of revealed pSSs. The reason is that when $m$ increases but $d_q$ remains unchanged, nearby safer options may become available which can cause the length of the SR to decrease. Therefore, the impact of varying $m$ on the number of revealed pSSs depends on the dataset. On the other hand, Fig.~\subref*{plt:fsr-pp-eDist-chi}, \subref*{plt:fsr-pp-cDist-chi}, and \subref*{plt:fsr-pp-xy-chi} show that the increasing the value of $d_q$, $\delta_R$ or $d_G$ increases the number of revealed pSSs for all algorithms for the Chicago dataset. We omit the graphs for Beijing and Philadelphia datasets as they show similar trends. For FSR queries, the number of pSSs revealed by \nitr{} and \gitr{} are, on average, 52.9\% and 47.3\% of those of direct algorithms. On average, \gitr{} reveals 8\% less pSSs than \nitr{}. Thus, \gitr{} preserves privacy better for FSR queries. }

\oc{Increasing the value of $m$ causes rapid communication frequency increase for \nitr{}, whereas for \gitr{}, it decreases very slowly (Fig.~\subref*{plt:fsr-cf-nPOIs-chi},~\subref*{plt:fsr-cf-nPOIs-phl}, and \subref*{plt:fsr-cf-nPOIs-small-beijing}). On the other hand, increasing the value of $d_q$ or $\delta_R$ increases the communication frequency for both \nitr{} and \gitr{}. However, the rate of change is sharp for \nitr{} and quite slow for \gitr{}. Fig.~\subref*{plt:fsr-cf-eDist-chi} and \subref*{plt:fsr-cf-cDist-chi} show this trend for the Chicago dataset. Philadelphia and Beijing datasets show similar trends (not shown). 
\nc{The comm. freq. shows an irregular trend for varying $d_G$ because the change in group size does not follow any regular pattern. }
Note that the average communication frequency of \nitr{} and \gitr{} is 157.7 and 4.6, respectively, and, \gitr{} decreases on average 80\% of the communication frequency of \nitr{}.}

\oc{Increasing the value of  $m$, $d_q$, or $\delta_R$ increases the runtime for all four algorithms for all datasets. In addition, the rate of increase of the runtime is higher for the na\"ive algorithms. For example, Fig. \subref*{plt:fsr-rt-nPOIs-chi}, \subref*{plt:fsr-rt-nPOIs-phl} and \subref*{plt:fsr-rt-nPOIs-small-beijing} show the results for all datasets for varying $m$. 
\nc{ With the increase of $d_G$, the search space does not change, but other factors (e.g., our pruning techniques and communication overhead) affect runtime in complex ways and result in an irregular runtime.}
For FSR queries, the average runtimes of \ndir{}, \nitr{}, \gdir{}, and \gitr{} are 6.2, 9.14, 1.6, and 0.8 seconds, respectively. Therefore, our efficient algorithms are 4 to 11 times faster than the na\"ive algorithms.}

\subsubsection{GSR Queries}\label{exp:gsr}

\begin{figure}[tbh!]
    \centering
    \pgfplotsset{
    compat=newest,
    /pgfplots/legend image code/.code={%
        \draw[mark repeat=3,mark phase=3,#1] 
            plot coordinates {
                (0cm,0cm) 
                (0.4cm,0cm)
                (0.8cm,0cm)
                (1.2cm,0cm)
                (1.6cm,0cm)%
            };
    },
}
\centerline{\begin{minipage}[t]{\columnwidth}
\centering
\scalebox{0.35}{
\begin{tikzpicture} 
    \begin{axis}[
        hide axis,
        xmin=0,
        xmax=0,
        ymin=-1,
        ymax=-1,
        legend style={at={(0.5,1.15)},anchor=south,legend columns=4,font=\fontsize{24}{0}\selectfont,},
        mark options={mark size=6pt, line width=2pt},
        cycle list name=my four fsr colors,
        ]
        \addplot {0};
        \addlegendentry{\ndir{}};
        \addplot {0};
        \addlegendentry{\nitr{}};
        \addplot {0};
        \addlegendentry{\gdir{}};
        \addplot {0};
        \addlegendentry{\gitr{}};
    \end{axis}
\end{tikzpicture}
}
\end{minipage}
}
    \input{safepath/plots/gsr-nG-chi}
    \input{safepath/plots/gsr-nG-phl}
    \input{safepath/plots/gsr-nG-small-beijing}
    \caption{The effect of varying $n$ on the performance metrics for GSR queries (all datasets)}
    \label{fig:exp:gsr:n}
\end{figure}
\begin{figure}[tbh!]
    \centering
    \pgfplotsset{
    compat=newest,
    /pgfplots/legend image code/.code={%
        \draw[mark repeat=3,mark phase=3,#1] 
            plot coordinates {
                (0cm,0cm) 
                (0.4cm,0cm)
                (0.8cm,0cm)
                (1.2cm,0cm)
                (1.6cm,0cm)%
            };
    },
}
\centerline{\begin{minipage}[t]{\columnwidth}
\centering
\scalebox{0.35}{
\begin{tikzpicture} 
    \begin{axis}[
        hide axis,
        xmin=0,
        xmax=0,
        ymin=-1,
        ymax=-1,
        legend style={at={(0.5,1.15)},anchor=south,legend columns=4,font=\fontsize{24}{0}\selectfont,},
        mark options={mark size=6pt, line width=2pt},
        cycle list name=my four fsr colors,
        ]
        \addplot {0};
        \addlegendentry{\ndir{}};
        \addplot {0};
        \addlegendentry{\nitr{}};
        \addplot {0};
        \addlegendentry{\gdir{}};
        \addplot {0};
        \addlegendentry{\gitr{}};
    \end{axis}
\end{tikzpicture}
}
\end{minipage}
}
    \input{safepath/plots/gsr-qA-chi}
    \input{safepath/plots/gsr-cDist}
    \input{safepath/plots/gsr-xy}
    \caption{The effect of varying $A_q^r$, $\delta_R$, and $d_G$ on the performance metrics for GSR queries (Chicago dataset)}
    \label{fig:exp:gsr:chi}
\end{figure}
\oc{We compare the impact of varying $n$, $A_q^r$, $\delta_R$, and $d_G$, for the efficient and na\"ive algorithms for GSR queries.}

\oc{Fig.~\subref*{plt:gsr-pp-nG-chi}, \subref*{plt:gsr-pp-nG-phl}, and \subref*{plt:gsr-pp-nG-small-beijing} show that increasing the group size, $n$, does not necessarily increase the number revealed pSSs in all datasets. On the other hand, increasing $\delta_R$, $d_G$ or $A_q^r$ increases the number revealed pSSs for GSR queries. Fig.~\subref*{plt:gsr-pp-queryArea-chi}, \subref*{plt:gsr-pp-cDist-chi}, and \subref*{plt:gsr-pp-xy-chi} show this trend for the Chicago dataset. Philadelphia and Beijing datasets show similar trends (not shown).  Here, \nitr{} and \gitr{} reveal on average 56.16\% and 56.24\% of the pSSs revealed by the direct algorithms, respectively. Thus, \gitr{} reveals slightly more (0.2\%) pSSs than \nitr{}. For $X_{it}=1$, \nitr{} always reveals more pSSs than \gitr{}. However, as $X_{it}$ increases, more pSSs are revealed for \gitr{} which eventually exceeds \nitr{} (not shown). The reason is that when the frontier of the search in \gitr{} is near a destination, due to the high value of $X_{it}$, it retrieves some edges' pSSs that are not necessary to reach the destination. Though \nitr{} does the same, the number is less for it as the first pruning criteria here prunes slightly more unnecessary edges in this situation.}

\oc{Fig.~\subref*{plt:gsr-cf-nG-chi}, \subref*{plt:gsr-cf-nG-phl} and \subref*{plt:gsr-cf-nG-small-beijing} show that the impact of $n$ on the communication frequency depends on the dataset. Nevertheless, the communication frequency increases with the increasing value of $A_q^r$ or $\delta_R$ for all datasets. Fig~\subref*{plt:gsr-cf-queryArea-chi} and  \subref*{plt:gsr-cf-cDist-chi} show this trend for the Chicago dataset. Philadelphia and Beijing datasets show the same trends (not shown). 
\nc{The effect of $d_G$ on comm. freq. for GSR queries is similar to that of FSR queries. }
Note that for GSR queries, the communication frequency of \nitr{} and \gitr{} are, on average, 26.6 and 15.6, respectively. Hence, \gitr{} decreases the communication frequency significantly (22.2\%) compared to \nitr{} by slightly compromising privacy. }

\oc{The runtime increases with the increasing values of $n$, $A_q^r$, and $\delta_R$ for all algorithms for all datasets. 
\nc{The effect of $d_G$ on runtime for GFSR queries is similar to that of FSR queries. }
The average runtimes of \ndir{}, \nitr{}, \gdir{}, and \gitr{} are 0.8, 1.1, 0.9, and 0.8 seconds, respectively, for GSR queries.}

\begin{figure}[tbh!]
    \centering
    \pgfplotsset{
    compat=newest,
    /pgfplots/legend image code/.code={%
        \draw[mark repeat=3,mark phase=3,#1] 
            plot coordinates {
                (0cm,0cm) 
                (0.4cm,0cm)
                (0.8cm,0cm)
                (1.2cm,0cm)
                (1.6cm,0cm)%
            };
    },
}

\scalebox{0.35}{
\begin{tikzpicture} 
    \begin{axis}[
        hide axis,
        xmin=0,
        xmax=0,
        ymin=-1,
        ymax=-1,
        legend style={at={(1.5,1.15)},anchor=south,legend columns=4,font=\fontsize{24}{0}\selectfont,},
        mark options={mark size=6pt, line width=2pt},
        cycle list name=my four fsr colors,
        ]
        \addplot {0};
        \addlegendentry{\ndir{}};
        \addplot {0};
        \addlegendentry{\nitr{}};
        \addplot {0};
        \addlegendentry{\gdir{}};
        \addplot {0};
        \addlegendentry{\gitr{}};
    \end{axis}
\end{tikzpicture}
}
    \input{safepath/plots/gfsr-nG-chi}
    \input{safepath/plots/gfsr-nG-phl}
    \input{safepath/plots/gfsr-nG-small-beijing}
    \input{safepath/plots/gfsr-nPOIs-chi}
    \input{safepath/plots/gfsr-nPOIs-phl}
    \input{safepath/plots/gfsr-nPOIs-small-beijing}
    \caption{The effect of varying $n$ and $m$ on the performance metrics for GFSR queries (all datasets)}
    \label{fig:exp:gfsr:nm}
\end{figure}
\begin{figure}[tbh!]
    \centering
    \pgfplotsset{
    compat=newest,
    /pgfplots/legend image code/.code={%
        \draw[mark repeat=3,mark phase=3,#1] 
            plot coordinates {
                (0cm,0cm) 
                (0.4cm,0cm)
                (0.8cm,0cm)
                (1.2cm,0cm)
                (1.6cm,0cm)%
            };
    },
}

\scalebox{0.35}{
\begin{tikzpicture} 
    \begin{axis}[
        hide axis,
        xmin=0,
        xmax=0,
        ymin=-1,
        ymax=-1,
        legend style={at={(1.5,1.15)},anchor=south,legend columns=4,font=\fontsize{24}{0}\selectfont,},
        mark options={mark size=6pt, line width=2pt},
        cycle list name=my four fsr colors,
        ]
        \addplot {0};
        \addlegendentry{\ndir{}};
        \addplot {0};
        \addlegendentry{\nitr{}};
        \addplot {0};
        \addlegendentry{\gdir{}};
        \addplot {0};
        \addlegendentry{\gitr{}};
    \end{axis}
\end{tikzpicture}
}
    \input{safepath/plots/gfsr-qA}
    \input{safepath/plots/gfsr-cDist}
    \input{safepath/plots/gfsr-xy}
    \caption{The effect of varying $A_q^r$, $\delta_R$ and $d_G$ on the performance metrics for GFSR queries (Chicago dataset)}
    \label{fig:exp:gfsr:chi}
\end{figure}
\subsubsection{GFSR Queries} \label{exp:gfsr}
\oc{We vary $n$, $m$, $A_q^r$, $\delta_R$, and $d_G$ to compare our safe route planner with the na\"ive algorithms for GFSR queries.}

\oc{For GFSR queries, the effect of increasing $n$ or $m$ on privacy (i.e., the number revealed pSSs) depends on the dataset (Fig.~\subref*{plt:gfsr-pp-nG-chi}, \subref*{plt:gfsr-pp-nG-phl}, \subref*{plt:gfsr-pp-nG-small-beijing}, \subref*{plt:gfsr-pp-nPOIs-chi}, \subref*{plt:gfsr-pp-nPOIs-phl}, \subref*{plt:gfsr-pp-nPOIs-small-beijing}). Contrarily, increasing the values of $A_q^r$, $\delta_R$ or $d_G$ increases the number of pSSs disclosure for all datasets. Fig. \subref*{plt:gfsr-pp-queryArea-chi}, \subref*{plt:gfsr-pp-cDist-chi} and \subref*{plt:gfsr-pp-xy-chi} show this trend for Chicago dataset. For GFSR queries, \nitr{} and \gitr{} reveal, on average, 52.6\% and 54.9\% of the pSSs revealed by the direct algorithms. Hence, \gitr{} reveals slightly more pSSs (3.8\%) than \nitr{} for GFSR queries. }

\oc{The communication frequency of \nitr{} and \gitr{} increases when increasing $n$ or $m$ for all datasets (Fig. \subref*{plt:gfsr-cf-nG-chi}, \subref*{plt:gfsr-cf-nG-phl}, \subref*{plt:gfsr-cf-nG-small-beijing}, \subref*{plt:gfsr-cf-nPOIs-chi}, \subref*{plt:gfsr-cf-nPOIs-phl}, and \subref*{plt:gfsr-cf-nPOIs-small-beijing}). It also increases with the increasing values of $A_q^r$ , $\delta_R$ for all datasets. Fig. \subref*{plt:gfsr-cf-queryArea-chi}, \subref*{plt:gfsr-cf-cDist-chi} show this trend for the Chicago dataset. 
\nc{The effect of $d_G$ here is similar to that of FSR queries.}
The average communication frequencies of \nitr{} and \gitr{} are 65.2 and 22.1, respectively. Therefore, \gitr{} finds the answer of GFSR queries with 48\% less communication frequency while compromising privacy slightly.}

\oc{Increasing the values of $n$ or $m$ increases the runtime for all algorithms for all datasets (Fig. \ref{fig:exp:gfsr:nm}). The runtime also increases for both na\"ive and efficient algorithms when $A_q^r$ or $\delta_R$ is increased in all datasets. Fig. \subref*{plt:gfsr-rt-queryArea-chi} and \subref*{plt:gfsr-rt-cDist-chi} show this trend for the Chicago dataset. The trends for Philadelphia and Beijing are the same for varying $A_q^r$ or $\delta_R$ (not shown). 
\nc{The effect of $d_G$ here is similar to that of FSR queries (e.g. Fig. \subref*{plt:gfsr-rt-xy-chi}). } 
The average runtime of \ndir{}, \nitr{}, \gdir{}, and \gitr{} are 9.6, 9.2, 3.3, and 3.9 seconds, respectively.}

\subsubsection{Modified $R$-tree} \label{exp:rtree}

\nc{Our system updates pSSs in the modified $R$-tree daily based on the visited route of the users. The update time includes the time for insert, search and delete operations. Table~\ref{tab:rtree} shows the average time to insert/delete/update pSSs daily per user, along with the average number of pSSs considered for each operation. Table~\ref{tab:rtree} also shows the average pSS search time and the average number of pSSs retrieved per 100 queries for \dir{} and \itr{} in the default setting. All types of operations incur very low processing overhead, which is practically acceptable.  Moreover, a user stores, on average, 10600 pSSs using 4400 supercells. Storing the pSSs by grouping them into supercells saves, on average, 53\% of user storage. 
}



\begin{table}[!htb]
\caption{\nc{Runtime of our $R$-tree operations}}
\label{tab:rtree}
\vspace{-2mm}
\centering
\nc{
    \begin{tabular}{|L{1.2cm}|R{0.8cm}|r|R{0.8cm}|r|R{0.8cm}|r|}
    \hline
        \multirow{2}{*}{Operation} & \multicolumn{2}{c|}{Chicago} & \multicolumn{2}{c|}{Philadelphia} & \multicolumn{2}{c|}{Beijing} \\ \cline{2-7}
            & Time (ms) & \#pSSs & Time (ms) & \#pSSs & Time (ms) & \#pSSs \\ \hline\hline
        Insert          & 14.444 & 3736 
                        & 16.867 & 6991 
                        & 19.194 & 5552 \\ \hline 
        Delete          & 0.898 & 1549 
                        & 1.196 & 2709 
                        & 1.625 & 4015 \\ \hline
        Update          & 19.156 & 4466 
                        & 31.270 & 9116
                        & 40.214 & 8621\\ \hline 
        search (\dir{}) & 0.195 & 3473 
                        & 0.397 & 7289 
                        & 0.184 & 3075 \\ \hline
        search (\itr{}) & 0.060 & 387  
                        & 0.060 & 279
                        & 0.055 & 553 \\ \hline
    \end{tabular}
}
\vspace{-5mm}
\end{table}




\subsection{Impact of Missing Data} \label{exp:missing}
A centralized architecture assumes that users share their travel experiences with a centralized server without considering privacy issues. However, in reality, this does not happen and the centralized solution has missing data. We investigate the impact of missing data on the quality of SRs.

As mentioned before, there exists no solution for finding SRs in our problem setting. Thus, for this experiment, we adopt our solution for the centralized model, where users share pSSs with the centralized server. We compare the accuracy and confidence level of our system with the centralized architecture. We vary the percentage of available data for the centralized model as 50\%, 60\%, 70\%, 80\%, and 90\% and denote them with C50, C60, C70, C80, and C90, respectively.

\subsubsection{Accuracy.}

\begin{figure}[!htb]
    \centering
    \pgfplotsset{
    compat=newest,
    /pgfplots/legend image code/.code={%
        \draw[mark repeat=3,mark phase=3,#1] 
            plot coordinates {
                (0cm,0cm) 
                (0.4cm,0cm)
                (0.8cm,0cm)
                (1.2cm,0cm)
                (1.6cm,0cm)%
            };
    },
}
\centerline{\begin{minipage}[t]{\columnwidth}
\centering
\scalebox{0.35}{
\begin{tikzpicture} 
    \begin{axis}[
        hide axis,
        xmin=0,
        xmax=0,
        ymin=-1,
        ymax=-1,
        legend style={at={(0.5,1.15)},anchor=south,legend columns=3,font=\fontsize{24}{0}\selectfont,},
        mark options={mark size=6pt, line width=2pt},
        cycle list name=my accuracy five colors,
        ]
        \addplot {0};
        \addlegendentry{C50};
        \addplot {0};
        \addlegendentry{C60};
        \addplot {0};
        \addlegendentry{C70};
        \addplot {0};
        \addlegendentry{C80};
        \addplot {0};
        \addlegendentry{C90};
    \end{axis}
\end{tikzpicture}
}
\end{minipage}
}
    \input{safepath/plots/rank-cDist}
    \vspace{0.1cm}
    \input{safepath/plots/rank-eDist}
    \vspace{0.1cm}
    \input{safepath/plots/rank-xy}
    \caption{Accuracy loss in the centralized model for  missing data for SR queries. C50 means 50\% of actual data is present.}
    \label{fig:plt:accuracy-centralized}
\end{figure}

In our system, users do not hesitate to share their pSSs as there is no fear of privacy violation. Thus, our system always provides the actual SR. We measure the accuracy as the percentage of the answers that are within the top-5 SRs. Fig. \ref{fig:plt:accuracy-centralized} shows that the average accuracy increases with an increase in user data (25.4\% for C50 and 49.9\% for C90) \oc{for SR queries}. Even 10\% missing data causes significant (50.1\%) accuracy loss. Hence it is important to adopt privacy preserving solution to find the SRs. For the same amount of available data, the accuracy decreases with the increase of $\delta$ and $d_q$, because the number of possible routes from $s$ to $d$ increases. The accuracy does not depend on $d_G$ (Fig. \subref*{plt:rank-xy-chi}-\subref*{plt:rank-xy-beijing}). 

\oc{For FSR, GSR and GFSR queries, we compute the accuracy on the same percentages of missing data in the default setting for all datasets (Table \ref{tab:missing:acc}). Here, the accuracy is measured as the percentage of the answers that are within the top-5 SRs for FSR queries and within the top-5 safest route sets for GSR and GFSR queries. For all of those queries, the accuracy keeps decreasing with the increase of missing data. For FSR queries, 10\% missing data causes the accuracy to drop to, on average, 37\%, and it drops further to 9.5\% for 50\% missing data for \nc{the centralized model}. Similarly, for GSR and GFSR queries 10\% missing data causes the accuracy to drop to 49.3\% and 42.4\%, respectively, \nc{for the centralized model}.}

\begin{table}[tbh!]
    \caption{Accuracy loss in the centralized model for missing data for FSR, GSR and GFSR queries in the default setting.}
    \label{tab:missing:acc}
    \centering
    \nc{
    \begin{tabular}{|C{0.9cm}|c|r|r|r|r|r|}
    \hline
     & & \multicolumn{5}{|c|}{Accuracy (\%)} \\ \cline{3-7}
    \multirow{-2}{0.9cm}{Query-type} & \multirow{-2}{*}{Dataset} & C50 & C60 & C70 & C80 & C90 \\ \hline \hline   
    \multirow{3}{*}{FSR} 
    & C & 8.4  & 6.3 & 13.7 & 16.8 & 25.3 \\ \cline{2-7}
    & P & 14.0 & 15.1 & 22.1 & 18.6 & 34.9 \\ \cline{2-7}
    & B & 6.0 &  7.5 & 11.9 & 23.9 & 50.7 \\ \hline\hline
    \multirow{3}{*}{GSR}
    & C & 22.6 & 29.0 & 32.3 & 43.0 & 41.9 \\ \cline{2-7}
    & P & 36.7 & 35.4 & 44.3 & 51.9 & 58.2 \\ \cline{2-7}
    & B & 28.3 & 23.9 & 26.1 & 34.8 & 47.8 \\ \hline\hline
    \multirow{3}{*}{GFSR}
    & C & 17.0 & 17.0 & 25.5 & 27.7 & 39.4\\ \cline{2-7}
    & P & 18.5 & 24.7 & 35.8 & 34.6 & 45.7\\ \cline{2-7}
    & B &  4.0 & 12.0 & 10.0 & 26.0 & 42.0\\ \hline
    \end{tabular}
    }
\end{table}

\subsubsection{Confidence Level (\textbf{CL})} \label{confidence-level}

\begin{figure}[bt!]
    \centering
    \pgfplotsset{
    compat=newest,
    /pgfplots/legend image code/.code={%
        \draw[mark repeat=3,mark phase=3,#1] 
            plot coordinates {
                (0cm,0cm) 
                (0.4cm,0cm)
                (0.8cm,0cm)
                (1.2cm,0cm)
                (1.6cm,0cm)%
            };
    },
}
\centerline{\begin{minipage}[t]{\columnwidth}
\centering
\scalebox{0.35}{
\begin{tikzpicture} 
    \begin{axis}[
        hide axis,
        xmin=0,
        xmax=0,
        ymin=-1,
        ymax=-1,
        legend style={at={(0.5,1.15)},anchor=south,legend columns=3,font=\fontsize{24}{0}\selectfont,},
        mark options={mark size=6pt, line width=2pt},
        cycle list name=my cl six colors,
        ]
        \addplot {0};
        \addlegendentry{C50};
        \addplot {0};
        \addlegendentry{C60};
        \addplot {0};
        \addlegendentry{C70};
        \addplot {0};
        \addlegendentry{C80};
        \addplot {0};
        \addlegendentry{C90};
        \addplot {0};
        \addlegendentry{Dir\_OA};
    \end{axis}
\end{tikzpicture}
}
\end{minipage}
}
    \input{safepath/plots/cl-cDist}
    \vspace{1mm}
    \input{safepath/plots/cl-eDist}
    \vspace{0.1cm}
    \input{safepath/plots/cl-xy}
    \vspace{0.1cm}
    \input{safepath/plots/cl-z}
    \caption{{CL for our system is higher than that of  the centralized model (SR queries).}}
    \label{fig:plt:confidence-level}
\end{figure}

Fig. \ref{fig:plt:confidence-level} shows that the CL for our system is always the highest {(on average 75.7\%)} for SR queries. Since both \dir{} and \itr{} provide optimal solutions, their CL is the same. In the centralized model, CL predictably increases with the increase of missing data. CL decreases when the SRs become longer (for $\delta$ and $d_q$). No particular trend is visible for $d_G$. 
\oc{The CL decreases with the increase in $d_G$ as the SR contains more cells and ensuring on average 50\% (the default value of $z$=50) knowledgeable users per cell becomes more difficult. The parameter $z$ included in the definition of the confidence level as discussed in Section~\ref{sec:CL} is varied within $\{25, \textbf{50}, 75, 100\}$ to cover the full range. For an increase in $z$, the CL decreases as expected.}  

\oc{For FSR, GSR and GFSR queries, we compute the CLs in default setting for all datasets (result not shown). The average CLs of FSR, GSR and GFSR queries are 47.2\%, 79.6\% and 71.7\% for our algorithms, whereas, \nc{for the centralized model,} 10\% missing data causes the CLs to drop to 43.1\%, 75.5\% and 66.8\%; moreover, 50\% missing data drops those values further to 24.3\%, 42.7\% and 36.9\%. }

\subsection{Effectiveness of SRs} \label{exp:query-effect}
\nc{We evaluate the effectiveness of the SRs with two sets of experiments: with respect to (i) the shortest routes and (ii) the SR returned by Dijkstra's algorithm.}

\emph{Comparison with the shortest routes. }
\oc{
Table~\ref{tab:SR-vs-SP} shows the results of this experiment for 
100 SR queries in the default setting. 
For each query, we compute the top-$K$ shortest routes using Yen's algorithm and check if any of those top-$K$ routes that are within $\delta$ are as safe as our SR. 
Table~\ref{tab:SR-vs-SP} shows that only 2.7\% among the shortest routes (K = 1) are the SRs. Increasing $K$ increases the percentage barely. Even for a high value of $K = 500$, only 27\% are the SRs. This is because the consecutive shortest routes (e.g., the third and the fourth ones) normally have very small differences in terms of the included roads in the routes.} 
\oc{
In addition, the computation time of the top-$K$ shortest routes exceeds that of \gdir{} for $K=50$ and keeps increasing for higher values of $K$. Computing the top-500 shortest routes takes, on average, 74 times more computation time than \gdir{}, and yet, the SR is not found in most cases.
}

\begin{table}[!htb]
\caption{The percentage of query samples for which top-$K$ shortest routes (ShR) include the respective SRs and the time needed to calculate them }
\label{tab:SR-vs-SP}
\hspace*{-3mm}
\centering
\begin{tabular}{|l|R{0.028\textwidth}|R{0.048\textwidth}|R{0.042\textwidth}|R{0.048\textwidth}|R{0.042\textwidth}|R{0.048\textwidth}|R{0.042\textwidth}|}
\hline
\multicolumn{2}{|c|}{\multirow{2}{*}{Algo}} & \multicolumn{2}{c|}{Chicago} & \multicolumn{2}{c|}{Philadelphia} & \multicolumn{2}{c|}{Beijing} \\ \cline{3-8}

\multicolumn{2}{|c|}{} & (\%) SRs & Time (sec.) & (\%) SRs & Time (sec.) & (\%) SRs & Time (sec.) \\ \hline\hline

\multicolumn{2}{|c|}{\dir{}} & \textbf{100.0} & \textbf{0.3} & \textbf{100.0} & \textbf{0.01} & \textbf{100.0} & \textbf{0.5} \\ \hline

\multirow{6}{*}{\rotatebox[origin=c]{90}{Top-$K$ ShR}} & 1 & 1.0 & 0.0 & 4.8 & 0.0 & 2.3 & 0.0 \\ \cline{2-8} 
 & 10 & 3.1 & 0.2 & 2.3 & 0.4 & 11.9 & 0.1 \\ \cline{2-8} 
 & 50 & 3.1 & 1.3 & 2.3 & 2.4 & 13.1 & 0.9 \\ \cline{2-8} 
 & 100 & 3.1 & 2.8 & 2.3 & 5.0 & 19.1 & 1.8 \\ \cline{2-8} 
 & 250 & 3.1 & 7.8 & 2.3 & 14.0 & 25.0 & 5.0 \\ \cline{2-8} 
 & 500 & 3.1 & 18.2 & 3.4 & 33.1 & 27.4 & 11.7 \\ \hline
\end{tabular}
\end{table}


\begin{table}[!htb]
\caption{\nc{The length of the SR returned by \gdir{} and Dijkstra's algorithm  under the default setting, the length ratio of the SR to the shortest path, and the time needed to calculate them}}
\label{tab:SR:dir-vs-dijkstra}
\vspace{-2mm}
\hspace*{-2mm}
\centering
\nc{
    \begin{tabular}{|L{0.041\textwidth}
    |R{0.03\textwidth}|R{0.025\textwidth}|R{0.024\textwidth}
    |R{0.03\textwidth}|R{0.025\textwidth}|R{0.024\textwidth}
    |R{0.03\textwidth}|R{0.025\textwidth}|R{0.024\textwidth}|}
    \hline
        \multirow{2}{*}{Algo} & \multicolumn{3}{c|}{Chicago} & \multicolumn{3}{c|}{Philadelphia} & \multicolumn{3}{c|}{Beijing} \\ \cline{2-10}
            & \barr Len\\(m) \earr & Ratio & \barr Time\\(sec.) \earr 
            & \barr Len\\(m) \earr & Ratio & \barr Time\\(sec.) \earr 
            & \barr Len\\(m) \earr & Ratio & \barr Time\\(sec.) \earr \\\hline\hline
        Shortest & 6958 & 1.00 & 0.0  
                & 6846 & 1.00 & 0.0 
                & 7815 & 1.00 & 0.0\\ \hline
        \gdir{} & 8162 & 1.17 & 0.31      
                & 8082 & 1.18 & 0.42
                & 8978 & 1.15 & 0.06\\ \hline
        Dijkstra (Safest) 
                & 31210 & 4.53 & 0.02 
                & 27852 & 4.24 & 0.02
                & 42008 & 5.68 & 0.02\\ \hline
    \end{tabular}
}
\end{table}

\emph{Comparison with the SR returned by Dijkstra's algorithm. }
\nc{
The SR computed by Dijkstra's algorithm cannot consider any distance constraint. 
For this experiment, we calculate the SRs using Dijkstra for 100 SR queries in the default setting and show the results in Table~\ref{tab:SR:dir-vs-dijkstra}. 
The average length of the SR by Dijkstra's algorithm is 4.5 (Chicago), 4.2 (Philadelphia) and 5.7 (Beijing) times the shortest path length, and can be as high as 15.38 (Chicago), 17.92 (Philadelphia) and 21.5 (Beijing) times in the worst case. In contrast, our SR is always 1.2 times the shortest path length as intended. Therefore, the SR returned by Dijkstra is infeasible for real-life use.}

\section{Conclusion}\label{sec:conclusion} 
We developed a novel journey planner for finding SRs with crowdsourced data and computation. In experiments, we observe that the data scarcity problem can have a significant impact on lowering the quality of SRs. For example, the actual SR is only identified for on average 36\% and 41\% times when a centralized route planner has 30\% and 20\% missing data, respectively. Our privacy-enhanced solution encourages more users to share their data and improves the quality of the SRs.

\oc{In this paper, we have focused on processing an SR query and its variants: FSR, GSR, and  GFSR queries. Our generalized algorithms can find the query answer in seconds; It takes on average 0.5 seconds for an SR query, 1.2 seconds for an FSR query, 0.9 seconds for a GSR query, and 3.6 seconds for a GFSR query. Our iterative query processing algorithm enhances user privacy by not revealing, on average, 47\% 
of the pSSs revealed by the direct query processing algorithm. The direct one is better than the iterative algorithm in terms of processing time and communication frequency. 
}

\nc{In the future, we plan to extend our solution in the following ways: (i) preserve privacy from malicious attackers, (ii) develop algorithms to find safest routes in a centralized way (i.e., users share their pSSs with a centralized server),  (iii) accommodate grid cells of variable sizes based on city dynamics, and (iv) selecting the appropriate event types to define safety in different contexts (e.g., time and weather of the day, travel mode) and learn impact values of various event type (e.g., crime or accident) in different contexts.}   

\textbf{Acknowledgments} This research has been done in Bangladesh University of Engineering and Technology (BUET). Fariha Tabassum Islam is supported by the ICT Division, Bangladesh (56.00.0000.028.33.108.18).

\bibliographystyle{IEEEtran}
\bibliography{ms}

\end{document}